\newtheorem{thm}{Theorem}
\newcommand{\SE}{\text{SE}(3)}
\newcommand{\SO}{\text{SO}(3)}
\newcommand{\R}{\mathbb{R}}
\newcommand{\I}{\mathbb{I}}
\newcommand{\se}{\mathfrak{se}(3)}
\newcommand{\so}{\mathfrak{so}(3)}
\newcommand{\bB}{\mathbb{B}}
\newcommand{\Ad}{\mathbf{Ad}}
\newcommand{\ad}{\mathbf{ad}}
\newcommand{\0}{\mathbf{0}}
\newcommand{\1}{\mathbf{1}}
\newcommand{\tr}{\textbf{tr}}
\newcommand{\sk}{\textbf{sk}}
\newcommand{\diag}{\text{diag}}
\newcommand{\bI}{\mathbb{I}}
\newcommand{\der}{\mathbf{d}}
\newcommand{\Der}{\mathbf{D}}
\newcommand{\Hes}{\mathbf{H}}
\newcommand{\bE}{\mathbb{E}}
\newcommand{\bg}{\mathbb{g}}
\newcommand{\fg}{\mathfrak{g}}
\newcommand{\cB}{\mathcal{B}}
\newcommand{\fX}{\mathfrak{X}}
\newcommand{\cC}{\mathcal{C}}
\newcommand{\cL}{\mathcal{L}}
\newcommand{\cE}{\mathcal{E}}
\newtheorem{lemma}{Lemma}
\newtheorem{proposition}{Proposition}
\newtheorem{corollary}{Corollary}
\theoremstyle{definition}
\newtheorem*{example*}{Example}
\newtheorem{assumption}{Assumption}
\newtheorem{remark}{Remark}
\newtheorem{problem}{Problem}
\newtheorem{definition}{Definition}
\begin{document}

\title{Geometric Fault-Tolerant Neural Network Tracking Control of Unknown Systems on Matrix Lie Groups}

\author{Robin Chhabra$^{1}$~\IEEEmembership{Senior Member,~IEEE,} and Farzaneh Abdollahi$^{2}$~\IEEEmembership{Senoir Member,~IEEE,}
\thanks{*This work was supported by a grant from the Natural Sciences and Engineering Research Council of Canada (DGECR-2019-00085).}
\thanks{$^{1}$R. Chhabra (corresponding author) is with the Department of Mechanical, Industrial, and Mechatronics Engineering, Toronto Metropolitan University,
        Toronto, ON M5B 2K3, Canada
        {\tt\small robin.chhabra@torontomu.ca}}%
\thanks{$^{2}$F. Abdollahi is with the
Department of Electrical Engineering,
        Amirkabir University of Technology,
Tehran, Iran
        {\tt\small 	f\_abdollahi@aut.ac.ir}}}%

\markboth{IEEE Transactions on Automatic Control}%
{Chhabra \MakeLowercase{\textit{et al.}}: Neural Network Control of
Unknown Systems on Matrix Lie Groups}


\maketitle

\begin{abstract}
We present a geometric neural network-based tracking controller for systems evolving on matrix Lie groups under unknown dynamics, actuator faults, and bounded disturbances. Leveraging the left-invariance of the tangent bundle of matrix Lie groups, viewed as an embedded submanifold of the vector space $\R^{N\times N}$, we propose a set of learning rules for neural network weights that are intrinsically compatible with the Lie group structure and do not require explicit parameterization. Exploiting the geometric properties of Lie groups, this approach circumvents parameterization singularities and enables a global search for optimal weights. The ultimate boundedness of all error signals---including the neural network weights, the coordinate-free configuration error function, and the tracking velocity error---is established using Lyapunov’s direct method. To validate the effectiveness of the proposed method, we provide illustrative simulation results for decentralized formation control of multi-agent systems on the Special Euclidean group.\end{abstract}

\begin{IEEEkeywords}
Neural Network Control, Matrix Lie Group, Formation Control, Decentralized Control
\end{IEEEkeywords}

\section{Introduction}
\IEEEPARstart{C}{ontrolling} nonlinear systems with no accurate knowledge of their dynamics is crucial, since in most practices accurate system identification is either expensive or infeasible. Adaptive control methods tackle this problem by offering stabilizing solutions for systems with models containing time-varying and uncertain parameters. Among all, data-driven techniques are lately gaining increasing attention due to the advancements in the speed and computational power of processors \cite{raissi-18}.
 These techniques, and more specifically Neural Networks (NNs), are known as powerful tools for function approximation, applicable to identifying and generalizing system
dynamics, motion planning, and control \cite{Kasra-book,wiliams:17,our-book,janathan:13,Duong:21}.
The challenge, however, is to stably train NN weights to provide convergence to (globally) optimum solutions even though the system undergoes drastic motion and variation in dynamic properties. Hence, developing model-free NN learning rules defined on the whole system's configuration space is desirable.  

The nonlinear dynamics of most examples of multi-agent systems including multi-vehicles evolve on matrix Lie groups, often associated with the space of rigid transformations, i.e., the Special Euclidean group ($\SE$) or its Lie subgroups. An intrinsic approach for modeling dynamical systems on matrix Lie groups uses the Euler-Poincar\'{e} equation~\cite{bloch,bullo-book}, leading to a set of matrix differential equations. Further, an effective functional analysis respecting the Lie group structure of the configuration manifold avoids local coordinate assignment and the resulting possibly ill-conditioned Jacobians~\cite{Lee-book17}. 
In fact, adaptive NN control of multi-agent systems based on such coordinate-free modeling 
 and analysis techniques define globally valid learning rules that can converge to global optima.

This paper develops a stable geometric NN control strategy for systems represented on matrix Lie
groups. We assume the system model is unavailable but obeys the Euler-Poincar\'{e} equation. The control problem we solve is tracking an arbitrary trajectory in the presence of matched and unmatched uncertainties, with a known desired performance. We consider unmatched uncertainties in the form of loss of actuator efficiency to model fault.
The contributions of this paper are:
  \begin{enumerate}
      \item We provide a geometric framework for coordinate-free functional analysis essential for function approximation and back-propagation on matrix Lie groups.
      \item We develop a set of global model-free learning rules for fault-tolerant NN-based tracking control and provide the necessary conditions for its stability, in the presence of bounded disturbances and uncertainties, using Lyapunov's direct method.
      \item We extend our approach to decentralized NN-based formation control of multi-agent systems evolving on matrix Lie groups.
  \end{enumerate}
To derive the intrinsic representation of the learning rules, we develop a geometric computational framework for back-propagation on matrix Lie groups, utilizing matrix differential equations rather than vector-based counterparts. Eliminating the parameterization of the configuration manifold further reduces computational complexity by avoiding the need to compute intricate Jacobians associated with parameterized representations in the learning process. By preserving the Lie group structure of the configuration manifold, the proposed learning rules also enable the global optimization of NN weights and facilitate the almost global stability of the controller under mild assumptions. 
We exploit the left invariance of the tangent bundle to formulate both the system dynamics and learning rules on the Lie algebra. Specifically, for the mobile robots evolving on $\SE$, this formulation ensures that velocities remain independent of the configuration while remaining compatible with onboard sensor measurements. This inherent geometric structure enhances fault tolerance and robustness in NN-based learning and control. Consequently, our approach establishes an NN control framework with improved generalizability, stability, and convergence properties.
  
  This paper is organized as follows. Section \ref{sec:relwork} provides an overview of the existing literature on the topic. In Section \ref{prelim}, we review some preliminaries and state the problem. The main results are provided in Section \ref{NNcont}. Section \ref{NNcont_mtl} extends our results to the decentralized formation control of multi-agent systems and demonstrates the efficacy of our approach in simulation. Finally, Section \ref{conclu} concludes the paper.

 \section{Related Work}\label{sec:relwork}
\subsection{NN-based Adaptive Control}

Stable NN-based adaptive control of uncertain nonlinear systems has been a topic of interest since the 1980s. In a seminal work, Chen and Khalil introduced layered neural networks to control feedback linearizable systems \cite{chen1992adaptive}. A variable neural network architecture is proposed for approximating the unknown nonlinearities of dynamical systems where the number of basis functions can be either increased or decreased in time to accommodate the problem of overfitting or underfitting the data set \cite{liu1999variable}. 

A NN-based control for a class of nonaffine nonlinear systems with input saturation was introduced in \cite{esfandiari2014adaptive}. A real-time deep NN adaptive control architecture is also developed for uncertain control-affine nonlinear systems to track a time-varying desired trajectory \cite{le2021real}. Authors of \cite{Kasra-book} investigate stable NN-based control of different forms of affine, nonaffine, and noncanonical control systems.
\cite{Yin22StableNN5}, introduced an approach to analyze  the stability of a class of nonlinear system  systems with neural network controllers.
Considering deep residual neural networks as a nonlinear control, it has been shown  the power of universal approximation  of these networks in \cite{Tabuada23deepResnet}.  Fabiani \textit{et al.} in \cite{Fabiani23stableNN} proposed an approach to  design minimum complexity NNs with rectified linear units (ReLUs) that keeps the desirable properties of a given Model Predictive Control scheme. 

Some works have been reported that consider the structure of the system's equations of motion in the design of NN-based controllers. In \cite{Lutter:19}, an NN-based control is developed for
systems evolving under the Euler-Lagrange equations. 
 In this approach, the kinetic and potential energy terms are modeled by neural networks, separately.
In \cite{Greydanus:19}, the concept of Hamiltonian NNs is proposed for Hamiltonian control systems. To approximate the  Hamiltonian, such NNs are designed whose weights are updated by minimizing the discrepancy between its synthetic gradients and the time derivatives of the states.  Hamiltonian and Lagrangian equations of motion, however,   rely on the generalized coordinates parameterizing the configuration manifold of the system, e.g., Euler angles to parameterize rotations, which cannot globally capture the system dynamics.
In \cite{Ruthotto-19}, a Partial Differential Equation (PDE) interpretation
of Convolutional Neural Networks (CNNs) is proposed and two new classes of hyperbolic and parabolic CNNs for residual NNs
and  ResNet architectures guided by the governing PDEs are derived.

\subsection{Control Systems on Lie Groups}
States of many systems, including mobile robots and manipulators, can be unambiguously represented on matrix Lie groups. More specifically in robotics, these Lie groups are often in the form of a Cartesian product of Lie subgroups of $\SE$ whose elements represent relative rigid body poses (position and rotation). 
Several fundamental studies on the role of Lie groups and symmetry in the kinematic and dynamic analysis of rigid robots were conducted in the literature \cite{murray,lynch2017modern,park1995lie,stramigioli2002geometry,brockett1984robotic} that have been followed by more contemporary researchers \cite{Chhabra2014a,muller2018screw}. Accordingly, by taking advantage of various properties of Lie groups some efficient control strategies for rigid multi-body systems have been developed \cite{bloch,borna,chhabracontrol,Ghasemi:2019,ABD_CHH:22,bullo-book,BULLO199917}. Specifically, feedback linearization on the Lie algebra of $\SE$ was proposed for vehicle-manipulator systems to perform full-pose end-effector precision control and target tracking \cite{chhabracontrol,borna,patrik1,patrik2}.
The Lie group $\SE$ is the semi-direct product of the Special Orthogonal group $\SO$ and $\R^3$.
In most studies, parametrizations of $\SO$, e.g., Rodriguez parameters, or Euler angles, are applied to formulate the dynamics on $\R^6$, which usually face limitations. Examples include the gimbal lock singularities in Euler angles \cite{markley-book} or unwinding phenomenon in quaternion-based feedback control \cite{mayhew:12}. In the case of using inverse dynamics in control design, the system becomes unstable at these singularities \cite{moghaddam2022}.  

Inspired by rigid body motion in $\SE$, the theory of control systems on Lie groups was pioneered in the work of Brockett \cite{doi:10.1137/0310021}, followed by Jurdjevic and Sussmann \cite{jurdjevic1972control}, where various control theoretic notions were extended to right-invariant vector fields. These early studies laid the groundwork for investigating various control structures in the context of differentiable manifolds and Lie groups. Motion control of drift-free, left-invariant systems on Lie groups was proposed in \cite{leonard1995motion}. Bullo and Lewis studied configuration error functions, connections, and extension of PD controllers to Lie groups \cite{bullo-book}.  
Zhang \textit{et al.} have extended the concept of integral control to systems on Lie groups \cite{zhang2015integral}, and Berg \textit{et al.} have investigated the stability of output-tracking PID controllers with intrinsic integral action\cite{maithripala2015intrinsic}. Further, Bloch \textit{et al.} have published research on the optimal control on Lie groups \cite{hussein2005optimal,bloch2017optimal}.
In \cite{Duong:21}, a neural ordinary differential equation network is designed for the identification of Hamiltonian systems on $\SE$, whose model was presented in \cite{Lee-book17}. However, no mathematical proof was provided to support tracking convergence of the closed-loop system when the neural network identifier is in the loop.
 In \cite{hashim-22}, an NN stochastic filter-based controller was proposed for systems evolving on $\SO$, which is not extendable to general matrix Lie groups.

\subsection{Adaptive Control of Multi-Agent Systems}
Multi-agent systems offer a robust and economical solution to cooperatively accomplish missions in a wide range of applications from smart cities, smart grids, and wildlife tracking to underwater and space exploration.
Using neighbors' information has led to the development of several decentralized controllers for different missions,  such as consensus,  swarming, formation,  synchronization,  and coverage  \cite{wang-peng:16,rezaee:18-2,parapari-2,forugh11,Mei-Ren:15}. There is a large body of research on data-driven approaches, specifically NNs in recent years, for adaptive control of uncertain multi-agent systems \cite{kiumarsi2017optimal}. Adaptive consensus control of nonlinear stochastic multi-agent systems was investigated, where the NN was adopted to estimate the unknown nonlinear dynamics \cite{li2021adaptive}. Multiple fixed-time adaptive NN controllers were proposed for nonlinear multi-agent systems \cite{huang2022adaptive} that can be stochastic and non-affine \cite{bai2022neural}.

 Distributed NNs were developed for adaptive synchronization \cite{peng2013distributed,shen2019neural} and consensus control \cite{yue2020neural} of uncertain multi-agent systems. These results have been enhanced to include actuator and network faults \cite{jin2021adaptive,qin2019neural}, as well as guaranteed transient performance \cite{cao2020performance}. 
A NN-based consensus control was introduced for a class of nonlinear multi-agent systems in \cite{Zou24masnn}.
Multi-vehicle systems are often described as several rigid bodies (vehicles) whose relative motion relative to one another can be captured on Lie groups.
  An attitude synchronization of multiple rigid bodies is introduced in  \cite{shahoo:14}, and its convergence is analyzed locally around a consensus configuration. In \cite{fadakar:14}, a robust adaptive attitude synchronization law for a network of rigid bodies is developed.  However, the studied convergence to the consensus configuration is
 merely limited to the specific tree topologies of the network graph. A Sliding Mode Control (SMC)
using the exponential coordinates was designed for spacecraft formation control problems with a virtual leader in  \cite{Lee:13}.
In \cite{Ghasemi:2019}, a robust formation control is proposed for multiple rigid bodies in the presence of uncertainties and disturbances. An adaptive super-twisting SMC with an intrinsic PID sliding surface was developed to cope with the uncertainties, unmodelled dynamics, and external disturbances while keeping the desired formation.
Most introduced control approaches in this section and the preceding sections either consider the full or partial knowledge of the system model or
 require configuration parameterization for the numerical implementation. Therefore, there is a gap in the literature to offer effective control of unknown multi-agent robotic systems described on matrix Lie groups.

  \section{Preliminaries and Problem Statement} \label{prelim}
%

\subsection{Notation}
Let $G$ be a matrix Lie group and $\fg$ be its Lie algebra that is equipped with the Lie bracket $[\cdot,\cdot]\colon\fg\times\fg\rightarrow\fg$ defined based on the matrix commutator operator. 
Let $\exp\colon\fg\rightarrow G$ be the exponential map associated with the Lie group $G$. We use the same notation for the exponential map of any Lie group and for brevity, we sometimes write $e^{\xi}\coloneqq\exp(\xi)$ for every $\xi\in\fg$. 
The tangent and cotangent bundles of $G$ are coined by $TG$ and $T^*G$, respectively. Generally, $^*$ specifies the dual of a vector space, a vector (field), or a vector bundle over $G$.
An $\R^k$-valued function $f\colon G\to\R^k$ belongs to the class $\cC^p(G)$, if it has $p^{th}$ derivative which is continuous. In this paper, wherever the codomain dimension of $f$ is not specified, it is assumed to be one. 
We specify the space of all vector fields on $G$ by $\fX(G)$.The directional derivative of a function $f\in\cC^{\infty}(G)$ along any $X\in\fX(G)$ is defined by the Lie derivative operator $\cL_Xf$. We also denote the total derivative of a function by $\der$ and the Jacobian (gradient) of a function between two Euclidean spaces by $\nabla$.
In addition, we specify the natural pairing between the elements of $T^*G$ and $TG$ by $\left<\cdot|\cdot\right>$ and the pairing between elements of $\fg$ and $\fg^*$ by $\left<\cdot,\cdot\right>$.

In this paper, we use the Frobenius norm of matrices, i.e., for any matrix $B=[B_{ij}] \in \R^{M \times K}$, we define
\begin{equation}
  \|B\|^2=\tr(B^TB)=\sum_{i,j} B_{ij}^2,
\end{equation}
where the operator $\tr$ is the trace of a square matrix. Further, by $\sk$ we denote the operator that generates the skew-symmetric component of a square matrix, i.e., $\sk(B)=\frac{1}{2}(B-B^T)$ for all $B\in\R^{M\times M}$.
 We also define the operator $\breve{(\cdot)}\colon \R^{M\times K}\rightarrow \R^{MK}$ such that  
\begin{align}\label{eq:breve}
    \breve{B}=[B_{11}~~\cdots~~B_{M1}~~\cdots ~~ B_{1K} ~~ \cdots ~~ B_{MK}]^T\in\R^{MK}
\end{align}
that reshapes the matrix to a vector. Given the matrix dimensions, the inverse operator
$\breve{\breve{(\cdot)}}\colon \R^{MK}\rightarrow \R^{M\times K}$ reshapes an $MK$-dimensional vector to an $M\times K$ matrix. Accordingly, for any arbitrary function $F$ dependant on the matrix $B\in\R^{M\times K}$, the gradient is calculated by
\begin{align}\label{eq:defgradF}
   \nabla_{B} F \coloneqq \left(\nabla_{ \breve{B}}F\right)\breve\breve\in\R^{M\times K}.
\end{align} 
 Lastly, $\1$ and $\0$ everywhere denote the identity and zero matrices with appropriate dimensions, respectively.
\subsection{Matrix Lie Groups}
Let $G$ be an  $n$-dimensional matrix Lie group, i.e., for some $N$ it is a closed Lie subgroup of the General Linear group $\mathbb{GL}(N)\subset \R^{N\times N}$, considering the subspace topology. We also assume that $G$ is closed in $\R^{N\times N}$, the $N^2$-dimensional space of matrices that is isomorphic to $\R^{N^2}$ as a vector space, equipped with the standard topology\footnote{Let $g_l$ be any sequence of matrices in $G$, such that $g_l\rightarrow g$ as $l\rightarrow\infty$. Under the assumptions stated here, $g$ must be in $G$.}. Therefore, $G$ is an embedded submanifold of both $\mathbb{GL}(N)$ and $\R^{N\times N}$, which allows $G$ to borrow different geometric structures from these two ambient manifolds. 

\begin{lemma}[Extension of continuous functions]\label{lem:cont}
    Let $f\in\mathcal{C}^0(G)$ be any continuous ($\R^k$-valued) function on the Lie group $G\subset \R^{N\times N}$. There exists a continuous extension of this function $\bar f\in\mathcal{C}^0(\R^{N\times N})$ such that $\bar f (g)= f(g)$ for all $g\in G$.
\end{lemma}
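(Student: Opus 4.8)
The plan is to recognize this statement as a direct instance of the \emph{Tietze extension theorem}. First I would record the two topological facts that make the theorem applicable: (i) $\R^{N\times N}$, being isomorphic as a topological vector space to $\R^{N^2}$ with its standard norm topology (for concreteness, the Frobenius metric introduced above), is a metric space and hence a normal topological space; and (ii) by the standing assumption on $G$ (the one flagged in the footnote), $G$ is a \emph{closed} subset of $\R^{N\times N}$. Since $G$ is an embedded submanifold carrying exactly the subspace topology inherited from $\R^{N\times N}$, a map that is continuous on the manifold $G$ is precisely a map that is continuous on $G$ regarded as a topological subspace of $\R^{N\times N}$, so there is no mismatch between the two notions of continuity in play.

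Next I would handle the scalar case $k=1$: applying the Tietze extension theorem to the closed set $G\subset\R^{N\times N}$ produces a continuous $\bar f\in\cC^0(\R^{N\times N})$ with $\bar f(g)=f(g)$ for all $g\in G$. For the general $\R^k$-valued case, write $f=(f_1,\dots,f_k)$ with each $f_i\in\cC^0(G)$, extend each component separately to $\bar f_i\in\cC^0(\R^{N\times N})$, and set $\bar f=(\bar f_1,\dots,\bar f_k)$; this is continuous and restricts to $f$ on $G$, which completes the argument. (If one additionally wants $\bar f$ to inherit any uniform bound satisfied by $f$, the bounded version of Tietze supplies this, though it is not needed for the lemma as stated.)

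There is no serious obstacle here; the only point worth emphasizing is \emph{where the hypotheses are actually used}: closedness of $G$ in $\R^{N\times N}$ is essential, since Tietze's theorem fails for non-closed subsets (a continuous function on $(0,1)\subset\R$ need not extend continuously to $\R$), so the assumption is substantive rather than cosmetic. If a self-contained argument is preferred over citing the theorem, one can instead exploit that $\R^{N\times N}$ is metric and build $\bar f$ explicitly on the open complement $\R^{N\times N}\setminus G$ using a locally finite partition of unity subordinate to a cover by balls whose radii shrink near $G$, assigning to each ball a value of $f$ at a nearby point of $G$; this is, however, exactly the bookkeeping in the standard proof of Tietze, so the direct appeal to the theorem is the cleanest route.
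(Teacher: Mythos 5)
Your proposal is correct and follows exactly the paper's argument: the paper's proof is a one-line appeal to the Tietze extension theorem, justified by the closedness of $G$ in the metric (hence normal) space $\R^{N\times N}$. Your additional remarks on the subspace topology, the componentwise treatment of the $\R^k$-valued case, and the necessity of closedness are sound elaborations of the same route.
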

\begin{proof}
    Since $G$ is closed in $\R^{N\times N}$, we can apply the Tietze Extension Theorem \cite[Th. 5.1, p. 149]{topology}.
\end{proof}

\begin{lemma}[Extension of smooth functions]\label{lem:smooth}
    Let $f\in\mathcal{C}^\infty(G)$ be any smooth ($\R^k$-valued) function on the Lie group $G\subset \R^{N\times N}$. There exists a smooth extension of this function $\bar f\in\mathcal{C}^\infty(\R^{N\times N})$ such that $\bar f (g)= f(g)$ for all $g\in G$.
\end{lemma}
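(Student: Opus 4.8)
The plan is to mirror the proof of Lemma~\ref{lem:cont}, replacing the Tietze extension by a \emph{smooth} retraction onto $G$ followed by a partition-of-unity gluing. Since $G$ is a closed embedded submanifold of the smooth manifold $\R^{N\times N}$, the tubular neighborhood theorem supplies an open set $U\subset\R^{N\times N}$ with $G\subset U$ together with a smooth retraction $\pi\colon U\to G$ (i.e. $\pi$ is smooth and $\pi(g)=g$ for every $g\in G$). Then $f\circ\pi\in\mathcal{C}^\infty(U)$ is a smooth extension of $f$ over the \emph{open} neighborhood $U$, and the only remaining task is to push this extension out to all of $\R^{N\times N}$ without destroying smoothness or altering its values on $G$.

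For the last step I would use a single smooth bump function. Because $G$ is closed and $U$ is an open neighborhood of it, there is $\psi\in\mathcal{C}^\infty(\R^{N\times N})$ with $0\le\psi\le 1$, $\mathrm{supp}\,\psi\subset U$, and $\psi\equiv 1$ on some open set containing $G$; such functions exist by the standard construction of bump functions on $\R^{N\times N}$. Define
\begin{equation*}
\bar f(B)=\begin{cases}\psi(B)\,f(\pi(B)) & \text{if } B\in U,\\ 0 & \text{if } B\notin \mathrm{supp}\,\psi.\end{cases}
\end{equation*}
The two formulas coincide (both vanish) on the overlap $U\setminus\mathrm{supp}\,\psi$, so $\bar f$ is a well-defined function on $\R^{N\times N}$ that is locally equal to a smooth function on each member of the open cover $\{\,U,\ \R^{N\times N}\setminus\mathrm{supp}\,\psi\,\}$, hence $\bar f\in\mathcal{C}^\infty(\R^{N\times N})$. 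For $g\in G$ one has $\psi(g)=1$ and $\pi(g)=g$, so $\bar f(g)=f(g)$. The $\R^k$-valued case follows by applying the argument componentwise (equivalently, the construction above is insensitive to the codomain).

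The one genuinely nonroutine ingredient is the existence of the smooth retraction $\pi$, and this is precisely where the hypotheses ``embedded'' and ``closed in $\R^{N\times N}$'' are used: embeddedness provides slice charts in which $G$ looks like $\R^{n}\times\{0\}\subset\R^{N^2}$, and closedness allows these to be assembled into an honest tubular neighborhood of \emph{all} of $G$ rather than of a relatively compact portion. If one prefers not to invoke the tubular neighborhood theorem, the same conclusion can be reached directly: cover $G$ by slice charts $\varphi_\alpha\colon V_\alpha\to\R^{n}\times\R^{N^2-n}$, let $f_\alpha$ be the extension of $f|_{V_\alpha\cap G}$ that is constant along the last $N^2-n$ slice coordinates, and glue the $f_\alpha$ to the zero function on $\R^{N\times N}\setminus G$ via a partition of unity $\{\rho_\alpha,\rho_0\}$ subordinate to $\{V_\alpha,\ \R^{N\times N}\setminus G\}$; since each $f_\alpha$ agrees with $f$ on $G$ and $\rho_0$ vanishes there, the glued function restricts to $f$ on $G$. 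Either way, the partition-of-unity gluing is the device that upgrades local smooth extensions to a global one.
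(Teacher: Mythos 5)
Your proof is correct, but it takes a genuinely different route from the paper's. The paper disposes of the lemma in one line: it first produces a \emph{continuous} extension via Lemma~1 (Tietze, using closedness of $G$ in $\R^{N\times N}$) and then upgrades it to a smooth one by citing Corollary~6.27 of Lee's \emph{Introduction to Smooth Manifolds} (Whitney approximation), which lets one replace a continuous map by a smooth map agreeing with it on any closed subset on which the original is already ``smooth'' in the local-extension sense. You instead build the smooth extension directly: a tubular-neighborhood retraction (or, equivalently, slice charts) supplies local smooth extensions, and a bump function supported in the tube glues them to zero outside. Your argument is more self-contained and, usefully, makes explicit the step the paper's citation hides --- namely that $f$, smooth on $G$ as a manifold, is smooth on the closed set $G$ in the sense required by Whitney approximation, which is exactly the slice-chart local extendability you spell out; the paper's route is shorter and avoids the tubular neighborhood theorem but leans on heavier approximation machinery. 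One small mis-attribution on your part: the tubular neighborhood theorem holds for \emph{any} embedded submanifold, closed or not, so closedness is not what makes the tube ``honest''; it is consumed at the bump-function stage (without it, the extension by zero can fail to be smooth at limit points of $G$ not in $G$), and since your construction invokes closedness precisely there, nothing in the proof actually breaks.
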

\begin{proof}
    The proof is the direct consequence of Lemma \ref{lem:cont} and Corollary 6.27 in \cite[p. 141]{lee-man}.
\end{proof}

Indeed, these two lemmas only deal with the existence of the extension of functions that may not be unique. They are instrumental in proving results regarding the NN approximation of functions on matrix Lie groups and finding their gradients to develop effective learning laws without using any minimal coordinate chart assignments. 

Let $\fg\subset \R^{N\times N}$ (an $n$-dimensional vector subspace) be the Lie algebra of $G$ endowed with the Lie bracket $[\cdot,\cdot]\colon\fg\times\fg\rightarrow\fg$. The Lie algebra is geometrically the span of the tangent vectors at the identity element $\1_{N\times N}\in G$ that comes with a vector space isomorphism, denoted by $\wedge\colon\R^n\rightarrow\fg$ such that $\forall \xi\in\R^n$, $\xi^\wedge\in\fg$. We also define the inverse isomorphism by $\vee\colon\fg\rightarrow\R^n$ such that $(\xi^\wedge)^\vee=\xi\in\R^n$. We will use the same symbols to indicate these isomorphisms for every Lie algebra throughout this paper. 

Thanks to the Lie group structure of $G$, its tangent bundle $TG$ can be globally left-trivialized by $\fg$, i.e., for every $(g,v)\in TG\subset (\R^{N\times N})^2$:
\begin{equation}
    (g,v)\mapsto (g,g^{-1}v)\in G\times\fg\mapsto (g,(g^{-1}v)^\vee)\in G\times\R^n,\label{eq:triv}
\end{equation}
to have $TG\cong G\times\fg\cong G\times\R^n$. Given a basis for $\fg$, i.e., a set of linearly independent vectors $\{\eta_i^\wedge\in\fg|\,\,\eta_i\in\R^n,\,\, i=1,\ldots n\}$, $\forall g\in G$ we have a natural basis $\{Y_i\in T_gG|\,\,Y_i=g\eta_i^\wedge,\,\,\eta_i\in\R^n,\,\, i=1,\ldots n\}$ for the tangent space $T_gG$, whose elements define left-invariant vector fields on $G$. Therefore, any smooth vector field $X\in\fX(G)$ can be written as $\sum_{i=1}^{n}X_i(g)Y_i(g)$, where for every $i=1,\ldots,n$, 
\begin{align*}
    X_i\colon& G\rightarrow\R\\
    &g\mapsto \left<Y_i^*(g)|X(g)\right>=\left<\eta_i^*,(g^{-1}X(g))^\vee\right>
\end{align*}
is a smooth function, while $^*$ denotes the dual of a vector. Here, the pairing on the left is the canonical pairing between the tangent bundle $TG$ and the cotangent bundle $T^*G$ and the pairing on the right is the natural pairing between the $\R^n$ representation of $\fg$ and its dual.

\begin{remark}\label{rem:basis}
Although the isomorphisms $\wedge$ and $\vee$ are independent of the choice of basis for $\fg$, a natural basis for $\fg$ is the image of the canonical basis of $\R^n$ under $\wedge$. This is often implicitly assumed when treating matrix Lie groups.
\end{remark}

Let $\exp\colon\fg\rightarrow G$ be the exponential map of the Lie group $G$ that coincides with the matrix exponential defined for $\mathbb{GL}(N)$. Then $\forall g\in G$, the matrix form of the adjoint representation of the Lie group $\Ad_g\colon\R^n\rightarrow\R^n$ is defined for every $\xi\in\R^n$ as:
\begin{align}
    \Ad_g\xi:=\left.\frac{d}{d\epsilon}\right|_{\epsilon=0}\big(g e^{\epsilon\xi^\wedge}g^{-1}\big)^{\vee}=\big(g\xi^\wedge g^{-1}\big)^{\vee}.
\end{align}
Further $\forall\xi,\eta\in\R^n$, the matrix form of the adjoint representation of the Lie algebra $\ad_\xi\colon\R^n\rightarrow\R^n$ is defined based on the Lie bracket operator:
\begin{align}
    \ad_\xi\eta:=[\xi^\wedge,\eta^\wedge]^\vee=\big(\xi^\wedge \eta^\wedge-\eta^\wedge \xi^\wedge\big)^\vee.
\end{align}
Then, the dual maps of the two adjoint representations coincide with the transpose of these matrices.

Let $\der$ denote the exterior derivative (total differential) operator that is applied to a function, i.e., for every $\R$-valued function $f\in\cC^{\infty}(G)$ and a smooth vector field $X\in\fX(G)$ we have $\der f\subset T^*G$ such that $\left<\der f|X\right>=\cL_Xf$ is the Lie derivative. 
\begin{definition}\label{def:der}
  On the matrix Lie group $G$, we define the operator $\der^L$ to be the left translation of $\der$ to the identity element, i.e., $\forall f\in\cC^\infty(G)$ ($\R$-valued function) and $\forall (g,v)\in TG$ such that $v=g\xi^\wedge$:
\begin{align}\label{eq:dir-der}
    \left<\der^L_g f,\xi\right>\coloneqq\left<\der_gf|v\right>\in\R.
\end{align}
  \qed
\end{definition}
 
\begin{remark}
    In other words, \eqref{eq:dir-der} defines the directional derivative of $f$ in the direction of $v\in T_gG$, in the dual basis associated with the left trivialization of $TG$. 
\end{remark}
\begin{proposition}
    Let $f\in\cC^\infty(G)$ be a smooth $\R^k$-valued function and $\{\eta_i^\wedge\in\fg|\,\,\eta_i\in\R^n,\,\, i=1,\ldots n\}$ be a basis for $\fg$. For any smooth (local) extension of $f$ denoted by $\bar f\in\cC^\infty(\R^{N\times N})$ and $\forall g\in G$, the components of $\der^L_g f_j\in(\R^n)^*$ (for all $j=1,\ldots,k$) in the dual basis $\{\eta^*_i\in(\R^n)^*|\,\,\eta_i\in\R^n,\,\, i=1,\ldots n\}$ can be calculated by
    \begin{align}   \label{eq:derf}     (\der^L_gf_j)_i\!:=\!\left<\der^L_gf_j,\eta_i\right>\!=\!\left.\frac{d}{d\epsilon}\right|_{\epsilon=0}\!\!\!\!\!\!\!\!\!\! f_j(g e^{\epsilon \eta_i^\wedge)}\!=\!\left.\frac{d}{d\epsilon}\right|_{\epsilon=0}\!\!\!\!\!\!\!\!\!\! \bar f_j(g\!+\!\epsilon g\eta_i^\wedge),
    \end{align}
    such that $\forall \xi=\sum_{i=1}^n\xi_i\eta_i\in\R^n$, we have $\left<\der^L_g f_j,\xi\right>=\sum_{i=1}^n(\der^L_gf_j)_i\xi_i\in\R$. 
\end{proposition}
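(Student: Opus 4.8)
The plan is to split the statement into its three asserted equalities and prove each by unwinding a definition, reducing everything to choosing convenient representative curves through $g$. First I would recall that, by the definition of the exterior derivative, for any smooth curve $c$ in $G$ with $c(0)=g$ one has $\left<\der_g f_j\,|\,\dot c(0)\right>=\left.\frac{d}{d\epsilon}\right|_{\epsilon=0}(f_j\circ c)(\epsilon)$, so that this number depends only on the velocity $\dot c(0)\in T_gG$; combined with Definition \ref{def:der} and the left trivialization \eqref{eq:triv}, the $i$-th component becomes
\[
(\der^L_g f_j)_i=\left<\der^L_g f_j,\eta_i\right>=\left<\der_g f_j\,\big|\,g\eta_i^\wedge\right>,
\]
and the task is to exhibit two curves through $g$ with velocity $g\eta_i^\wedge$, one living in $G$ and one living in the ambient space.

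For the middle expression in \eqref{eq:derf} I would take $\epsilon\mapsto g e^{\epsilon\eta_i^\wedge}$, which lies in $G$ because $\exp$ maps $\fg$ into $G$; differentiating the matrix exponential series gives $\left.\frac{d}{d\epsilon}\right|_{\epsilon=0}g e^{\epsilon\eta_i^\wedge}=g\,\eta_i^\wedge$, hence $\left<\der_g f_j\,\big|\,g\eta_i^\wedge\right>=\left.\frac{d}{d\epsilon}\right|_{\epsilon=0}f_j(g e^{\epsilon\eta_i^\wedge})$. For the last expression I would pass to $\R^{N\times N}$: by Lemma \ref{lem:smooth} (or merely a local extension near $g$) there is $\bar f_j\in\cC^\infty(\R^{N\times N})$ with $\bar f_j|_G=f_j$, so writing $\iota\colon G\hookrightarrow\R^{N\times N}$ for the inclusion we have $f_j=\bar f_j\circ\iota$ and therefore $\der_g f_j=\der_g\bar f_j\circ d\iota_g$, where $d\iota_g$ is simply the inclusion $T_gG\hookrightarrow\R^{N\times N}$. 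Applying this to $g\eta_i^\wedge\in T_gG$ and using the straight line $\epsilon\mapsto g+\epsilon g\eta_i^\wedge$ in $\R^{N\times N}$, whose velocity at $\epsilon=0$ is $g\eta_i^\wedge$, yields $\left<\der_g f_j\,\big|\,g\eta_i^\wedge\right>=\nabla\bar f_j(g)\cdot(g\eta_i^\wedge)=\left.\frac{d}{d\epsilon}\right|_{\epsilon=0}\bar f_j(g+\epsilon g\eta_i^\wedge)$. Finally, the expansion $\left<\der^L_g f_j,\xi\right>=\sum_{i=1}^n(\der^L_g f_j)_i\xi_i$ for $\xi=\sum_i\xi_i\eta_i$ follows immediately from the $\R$-linearity of $\der^L_g f_j\in(\R^n)^*$; the argument is identical for each component index $j=1,\ldots,k$.

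I expect the only genuinely delicate point to be the last equality: justifying that differentiating an \emph{arbitrary} smooth extension $\bar f_j$ along the straight segment $g+\epsilon g\eta_i^\wedge$ — which does \emph{not} remain on $G$ — nevertheless recovers the intrinsic quantity $\left<\der_g f_j\,|\,g\eta_i^\wedge\right>$, independently of the chosen extension. This rests on two facts that I would state explicitly: (i) since $G$ is an embedded submanifold of $\R^{N\times N}$, $d\iota_g$ identifies $T_gG$ with a linear subspace of $\R^{N\times N}$ and the chain rule gives $\der_g f_j=\der_g\bar f_j\big|_{T_gG}$, which is manifestly independent of the extension; and (ii) a directional derivative along a curve depends only on the curve's velocity at the base point, so it is harmless that $g+\epsilon g\eta_i^\wedge$ leaves $G$. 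Everything else — the identification $(g,v)\mapsto(g,(g^{-1}v)^\vee)$, Definition \ref{def:der}, and the derivative of the matrix exponential — is routine bookkeeping.
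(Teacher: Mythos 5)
Your argument is correct and follows essentially the same route as the paper's own proof: existence of the extension via Lemma \ref{lem:smooth}, evaluating the middle term with the curve $\epsilon\mapsto g e^{\epsilon\eta_i^\wedge}$, and obtaining the last term by pulling back through the inclusion $\iota\colon G\hookrightarrow\R^{N\times N}$ (chain rule) and differentiating $\bar f_j$ along the straight line $g+\epsilon g\eta_i^\wedge$, which has the same velocity $g\eta_i^\wedge$. Your explicit remarks on extension-independence and curve-independence are exactly the points the paper's Step 3 relies on, so nothing is missing.
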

\begin{proof}
    The detailed proof is presented in Appendix \ref{app:pro-prop1}.
\end{proof}
According to this proposition, given a basis $\{\eta_i^\wedge\in\fg|\,\,\eta_i\in\R^n,\,\, i=1,\ldots n\}$ for $\fg$, we can calculate the Jacobian matrix $\Der^L_gf\colon\R^n\rightarrow\R^k$ at every element $g\in G$ by:
\begin{align}\label{eq:Der}
    \Der^L_gf=\begin{bmatrix}
        (\der^L_gf_1)_1 & \cdots & (\der^L_gf_1)_n\\
        \vdots & \ddots & \vdots\\
        (\der^L_gf_k)_1 & \cdots &(\der^L_gf_k)_n
    \end{bmatrix}.
\end{align}
Considering a curve $t\mapsto g(t)\in G$ with the velocity field $t\mapsto\xi=(g^{-1}\dot g)^\vee$, then we have $\dot f=\cL_{\dot g}f=(\Der^L_gf)\xi\in\R^k$. We also define the left-translated Hessian of an $\R$-valued function $f\in\cC^{\infty}(G)$ to be denoted by $\Hes^L_gf\coloneqq \Der_g^L((\der_g^Lf)^T)$, such that $\ddot f=\cL_{\dot g}\cL_{\dot g}f=\xi^T(\Hes_g^Lf)\xi+(\der_g^Lf)\dot\xi\in\R$. This definition of Hessian does not necessarily result in a symmetric operator, since the second derivative is taken after left translating the total derivative. Note that here we do not left translate the regular Hessian tensor of a function to identity. 

Now, let us consider any smooth function $\Phi\colon\R^k\rightarrow G\subset \R^{N\times N}$ to the Lie group $G$. 
Let $\nabla$ denote the gradient operator, i.e., for every function $\Phi$, $\nabla_\varphi\Phi\colon\R^k\rightarrow T_{\Phi(\varphi)}G$ is the linear Jacobian mapping at $\varphi\in\R^k$. This operator can be evaluated through defining a collection of $k$ smooth vector fields $\{Z_j\in\fX(G)|~Z_j=\frac{\partial\Phi}{\partial\varphi_j}\in\R^{N\times N}, j=1,\ldots,k\}$ that are always tangent to the image of $\Phi$, such that $\forall\nu\in\R^k$, we have $\nabla_\varphi\Phi(\nu)=\sum_{j=1}^k\nu_jZ_j(\Phi(\varphi))\in T_{\Phi(\varphi)}G\subset\R^{N\times N}$. The components of these vector fields in the basis associated with the left trivialization of $TG$ can then be computed by defining the functions 
\begin{align*}
    Z_{ji}\colon &\R^k\rightarrow\R\\
    &\varphi\mapsto\left<\eta_i^*,(\Phi(\varphi)^{-1}Z_j(\Phi(\varphi)))^\vee\right> 
\end{align*}
 for all $i=1,\ldots,n$ and $j=1,\ldots,k$.

\begin{definition}\label{def:JAcobian}
  On the matrix Lie group $G$, for any function $\Phi\colon\R^k\rightarrow G\subset \R^{N\times N}$ we define $\nabla^L_\varphi\Phi\colon \R^k\rightarrow\R^n$ to be the left translation of the gradient $\nabla_\varphi\Phi$ to the identity element, i.e., $\forall \varphi\in \R^k$ and given a basis $\{\eta_i^\wedge\in\fg|\,\,\eta_i\in\R^n,\,\, i=1,\ldots n\}$:
\begin{align}\label{eq:Jacobian}    \nabla^L_\varphi\Phi\coloneqq\begin{bmatrix}
        Z_{11}(\varphi) & \cdots & Z_{k1}(\varphi)\\
        \vdots & \ddots & \vdots\\
        Z_{1n}(\varphi) & \cdots &Z_{kn}(\varphi)
    \end{bmatrix}.
\end{align}
  \qed
\end{definition}
\noindent Based on this definition, then considering a curve $t\mapsto \varphi(t)\in \R^k$, we have $(\Phi^{-1}\dot\Phi)^\vee=(\Phi^{-1}\cL_{\dot\varphi}\Phi)^\vee=(\nabla^L_\varphi\Phi) \dot\varphi\in\R^n$.
\begin{remark}\hfill
\begin{enumerate}
    \item Note that other than smoothness we impose no restrictions on the definition of $\Phi$; hence, the vector fields $Z_j\in\fX(G)$ may not be everywhere linearly independent nor they may span the tangent space of $T_gG$ at every element $g\in G$ in the image of $\Phi$.

    \item In the natural basis for $\fg$ induced by the isomorphism $\wedge$ (Remark \ref{rem:basis}), the $j^{th}$ column of $\nabla^L_\varphi\Phi$ is simply $(\Phi(\varphi)^{-1}Z_j(\Phi(\varphi)))^\vee$.
\end{enumerate}
    
\end{remark}
\begin{remark}\hfill
\begin{enumerate}
    \item When the domain and codomain of a function are Euclidean spaces, both operators $\Der^L$ and $\nabla^L$ coincide with the regular Jacobian of the function. We choose to use $\nabla$ to show this Jacobian.
    \item All of the functions and operators defined in this section on the matrix Lie group $G$ only require the matrix representation of elements of $G$ and a basis for the Lie algebra $\fg$ and they are independent of any minimal parameterization of $G$.
    \item In this section, the choice of left trivialization (vs. right trivialization) was arbitrary. Every statement in this section is also valid for the right trivialization of $TG$, with minor adjustments.
\end{enumerate}    
\end{remark}

    

\subsection{NNs as Function Approximators on Matrix Lie Groups}

Relying on the universal approximation property of NNs, any continuous
function on a Euclidean space can be arbitrarily estimated using a
two-layer NN. In this section, we use this property to approximate smooth functions on matrix Lie groups.
 Let $G$ be an $n$-dimensional matrix Lie group that is embedded in $\R^{N\times N}\cong \R^{N^2}$ (through the \textit{breve} operator \eqref{eq:breve}) and $f\in\cC^\infty(G)$ be a continuous $\R^k$-valued function that has a continuous extension $\bar f\in\cC^\infty(\R^{N^2})$ (see Lemma \ref{lem:smooth}). A two-layer NN can approximate $f$ by
\begin{align}
f(g)=\bar f (g) = W \sigma (V \breve{g}) - \cE(\breve{g})\in\R^k.~~~\forall g\in G\subset\R^{N\times N}
\end{align}
Here, the matrix $V\in\R^{m\times N^2}$ contains the input to hidden layer (with $m$ fully-connected neurons) ideal interconnection
weights, the matrix $W\in\R^{k\times m}$ contains the output to hidden layer ideal interconnection weights, and $\cE\colon\R^{N^2}\rightarrow\R^k$ is a bounded approximation error function. Based on the Universal Approximation Theorem, all of the weights are bounded. The function $\sigma\colon\R^m\rightarrow\R^m$ is an activation function corresponding to the hidden layer. Although this can be any frequently used activation function, such as, sigmoid or logistic,
tangent hyperbolic, and
rectified linear unit, we work with sigmoid function in this paper that is defined by:
\begin{align}\label{eq:sigma}
  \sigma(y)\!=\!\begin{bmatrix}
      \sigma_1(y_1)~\cdots~ \sigma_m(y_m)
  \end{bmatrix}^T, \sigma_i (y_i)\! = \!-1+\frac{2}{1+e^{-2y_i}}. 
\end{align}

\subsection{Uncertain Euler-Poincar\'{e} Control Systems}

On an $n$-dimensional Lie group $G$, we consider a left-invariant metric that is induced by the metric $\bI\colon \fg\rightarrow\fg^*$ on the Lie algebra whose matrix representation belongs to $\R^{n\times n}$. 
Using the left trivialization described in \eqref{eq:triv}, the uncertain and disturbed Euler-Poincar\'{e} equation on $G$ can then be written as the following vector field \cite{bloch}:
\begin{align}
  \dot g &=g\xi^\wedge\in\R^{N\times N},\label{dynamicsm1}\\
  \dot\xi &= \bI^{-1}\ad^T_{\xi}\bI\xi+\bI^{-1}\bE u+\mu(g,\xi)+d(t)\in\R^n,
  \label{dynamicsm2}
\end{align}
where $\bI$ is an unknown symmetric, positive-definite, time-varying matrix, the vector of bounded unknown  forces is denoted by $\mu(g,\xi)$, $d(t)$
is the vector of time-dependant bounded disturbances, the variable diagonal matrix $\bE\in\R^{n\times n}$ with diagonal elements in $(0,1]$ represents the unknown actuators loss of efficiency, and $u\in\R^n$ is the vector of control inputs.

In this paper, we are concerned with designing a trajectory-tracking control strategy for a fully-actuated system that satisfies the dynamics in \eqref{dynamicsm1}-\eqref{dynamicsm2}. Let $t\mapsto g_d(t)\in G$ denote a smooth desired trajectory and $\xi_d\coloneqq (g_d^{-1}\dot g_d)^\vee\in\R^n$ be its velocity in the left-trivialized tangent bundle $TG$. We assume $g_d$, $\xi_d$, and $\dot\xi_d$ are everywhere bounded. Given a state of the system $(g,\xi)\in G\times\R^n$, we define the configuration error and accordingly, the compatible velocity error as \cite{bullo-book}
\begin{align}\label{eq:conferr}
\tilde g&\coloneqq g_d^{-1}g\in G\subset\R^{N\times N},\\ \label{eq:velerr}
    \tilde\xi&\coloneqq (\tilde g^{-1}\dot{\tilde{g}})^\vee=\xi-\Ad_{\tilde g^{-1}}\xi_d\in\R^n.
\end{align}
To form the error dynamics, we take the time derivative of \eqref{eq:velerr} and substitute $\dot\xi$ from 
\eqref{dynamicsm2}:
\begin{align}\nonumber
    \dot{\tilde\xi}&=\dot\xi+\ad_{\tilde\xi}\Ad_{\tilde g^{-1}}\xi_d-\Ad_{\tilde g^{-1}}\dot\xi_d\\ \nonumber
    &=\bI^{-1}\ad^T_{\xi}\bI\xi+\mu(g,\xi)+\ad_{\tilde\xi}\Ad_{\tilde g^{-1}}\xi_d-\Ad_{\tilde g^{-1}}\dot\xi_d+\bI^{-1}\bE u+d.
\end{align}
 Then, by introducing 
 \begin{align}\nonumber
     \tau(\tilde g, \tilde \xi,&t)\coloneqq\I^{-1}\ad^T_{\big(\tilde\xi+\Ad_{\tilde g^{-1}}\xi_d\big)}\I\big(\tilde\xi+\Ad_{\tilde g^{-1}}\xi_d\big)\\
     &+\mu(g_d\tilde g,\tilde\xi+\Ad_{\tilde g^{-1}}\xi_d)+\ad_{\tilde\xi}\Ad_{\tilde g^{-1}}\xi_d-\Ad_{\tilde g^{-1}}\dot\xi_d,
     \end{align}
     we can write the error dynamics in the compact form:
     \begin{align}
  \dot {\tilde g }&=\tilde g\tilde\xi^\wedge\in\R^{N\times N},\label{error-dynamicsm1}\\
  \dot{\tilde\xi} &= \tau+\bB u+d\in\R^n,
  \label{error-dynamicsm2}
\end{align}
where for brevity we defined $\bB=\bI^{-1}\bE$.
\begin{definition}[Configuration error function]\label{def:ConfErr}
    A smooth, proper function $\psi\colon G\rightarrow \R_{\geq 0}$ is a configuration error function if it satisfies:
    \begin{enumerate}
        \item $\psi(g^{-1})=\psi(g)$, for all $g\in G$, 
        \item $\psi(g)\geq 0$, for all $g\in G$, and $\psi(g)=0$ if and only if $g=\1_{N\times N}$, 
        \item $\der^L_{\1_{N\times N}}\psi=\0_{1\times n}$ and
        $\Hes_{\1_{N\times N}}^L\psi$ is symmetric positive-definite.
    \end{enumerate}
    \end{definition}
    Based on \cite[Prop. 6.30]{bullo-book}, there exists a constant $\Theta>0$ (the smallest critical value of $\psi$ greater than $0$) such that the positive-definite function $\psi$ is quadratic on all compact sets $\psi^{-1}(\leq\theta)$, for all $0<\theta<\Theta$, i.e., for some constants $b_2\geq b_1>0$
    \begin{align}\label{eq:b1b2}
        b_1\|\der^L_g\psi\|^2\leq\psi(g)\leq b_2\|\der^L_g\psi\|^2.
    \end{align}
    \begin{remark}
    The quadratic property is necessary to prove the stability of the proposed control law in this paper. 
    An example of a quadratic configuration error function on $\SE$ will be presented in Section \ref{NNcont_mtl}.
\end{remark}
\begin{definition}\label{def:GlobConfErr}
    A configuration error function $\psi$ is global, if we have $\psi(g)\rightarrow\infty$ while $\|g\|\rightarrow\infty$, and it is globally quadratic if $\Theta=\infty$.
\end{definition}
\begin{assumption}\label{as:quad}
    Herein, we assume that the matrix Lie group $G$ is equipped with a quadratic configuration error function $\psi$ with constants $\Theta,b_1,b_2>0$, as in \eqref{eq:b1b2}. 
\end{assumption}

 If $\tau$ and $\I$ are known, $\bE=\1_{n\times n}$, and $d=\0_{n\times 1}$, a PD-type controller based on feedback linearization on the Lie algebra $\fg$ would be \cite{bullo-book,chhabracontrol,patrik1,patrik2}:
\begin{equation}
  u^*=\I (-\tau-A \tilde{\xi}-k_p(\der^L_{\tilde g}\psi)^T)\in\R^n, \label{ustar1}
\end{equation}
where $A\in\R^{n\times n}$ is a symmetric positive-definite matrix, $k_p\geq 1$ is a constant, and $\der^L_{\tilde g}\psi$ is found based on \eqref{eq:derf}. 
Then, the nominal error dynamics for the known and undisturbed closed-loop system will be:
 \begin{align}
  \dot {\tilde g }^*&=\tilde g^*(\tilde\xi^*)^\wedge\in\R^{N\times N},\label{ideal-error-dynamicsm1}\\
  \dot{\tilde\xi}^* &= -A \tilde{\xi}^*-k_p(\der^L_{\tilde g^*}\psi)^T\in\R^n.
  \label{ideal-error-dynamicsm2}
\end{align}
By assigning the eigenvalues of $A$ and properly defining $k_p$ and parameters in the function $\psi$, we can control the behaviour of the closed-loop system. 
However, in this paper, we assume that $\tau$, $\I$, $\bE$, and $d$ are not known and accordingly we will use a neural network control to guarantee bounded deviations from the nominal error dynamics in the presence of unknown system model and bounded disturbances.

\begin{problem}[NN Tracking Control]\label{prob1}
Given the uncertain Euler-Poincar\'{e} control system on the matrix Lie group $G$ in \eqref{dynamicsm1}-\eqref{dynamicsm2}, a bounded desired trajectory $t\mapsto g_d(t)$, and the configuration and velocity errors on $G$, i.e., \eqref{eq:conferr}-\eqref{eq:velerr}, resulting in the error dynamics \eqref{error-dynamicsm1}-\eqref{error-dynamicsm2}, find the control signal $u$ based on a neural network, such that a stable tracking with the performance of the nominal closed-loop system \eqref{ideal-error-dynamicsm1}-\eqref{ideal-error-dynamicsm2} is guaranteed, in the presence of unknown bounded disturbances and uncertainties.
\end{problem}

\section{Neural Network Controller}\label{NNcont}
To design a controller for Problem \ref{prob1}, the control signal can be considered as a function defined by a NN that must approximate $\bE^{-1}u^*$ for the unknown system, as closely as possible.  The ideal weights, however, are not available, and a learning rule compatible with the Lie group structure of $G$ should be proposed to guarantee bounded approximation weight errors and also the boundedness of tracking error, in the closed-loop system. 

  Let us consider a NN with one hidden layer consisting of $m$ neurons for approximating the ideal controller for the unknown system, i.e.,
\begin{eqnarray}
  \bE^{-1}u^* = W \sigma(V x)-\cE(x)\in\R^n \label{opt-cont1}
\end{eqnarray}
where  $x\in\R^M$ is the $M$-dimensional NN input vector. The matrices $W\in\R^{n\times m}$ and $V\in\R^{m\times M}$ are the ideal NN weights with the bounds $\varpi,\vartheta>0$ such that $\|W\| \leq \varpi$  and $\|V\|  \leq \vartheta$, and $\cE(x)\in\R^n$ is the unknown approximation error which is also bounded based on the Universal Approximation Theorems. 
Since the ideal values of the weights are not known a priori, we must use the approximated weight matrices, denoted by $\hat W$ and $\hat V$. Therefore, the NN-based control signal is
\begin{eqnarray}
  u = \hat{W} \sigma(\hat{V} x)\in\R^n. \label{opt-con2}
\end{eqnarray}
Substituting \eqref{opt-con2} in \eqref{error-dynamicsm2}, the control system becomes
\begin{align}
    \dot{\tilde{\xi}} = \tau+\bB\hat{W} \sigma(\hat{V} x)+d.  
\end{align}
Based on (\ref{ustar1}) and (\ref{opt-cont1}), we can find the unknown term $\tau$ as a function of the ideal weights of the NN and approximation error that results in the following feedback transformation of the system:
\begin{align}
  \dot{\tilde{\xi}} =  -A \tilde{\xi}-k_p(\der^L_{\tilde g}\psi)^T+\bB(-W \sigma(V x)+\cE+ \hat{W} \sigma(\hat{V} x))+d.\label{er-2}
\end{align}
The error dynamics \eqref{er-2} and the error kinematics \eqref{error-dynamicsm1} define a vector field on the left-trivialized tangent bundle of $G$ that is parameterized by time $t$ (through the involvement of the desired trajectory and disturbances) and the weights $\hat W$ and $\hat V$ (by definition of the controller). Accordingly, the integral curves of this vector field are also functions of the weights $\hat W$ and $\hat V$, i.e., the time-evolution of the system can be written as $(\tilde g(t;\hat W,\hat V),\tilde\xi(t;\hat W,\hat V))$. This can be considered as the variation of the system trajectories by the NN weights. Therefore, we can make sense of the gradients of the integral curves at any time $t$ with respect to these weights that particularly define a set of vector fields along the curve $(\tilde g(t; W, V),\tilde\xi(t; W, V))$ corresponding to the ideal controller \eqref{opt-cont1}. This is indeed instrumental in defining appropriate learning rules for the NN. Before presenting such a rule, we state the following lemmas.

\begin{lemma}\label{lem:gradW}
    Given the error dynamics \eqref{er-2} and the error kinematics \eqref{error-dynamicsm1}, the time-evolution of the gradient of the system states with respect to the NN weights in $\hat W$ satisfy the following systems of matrix differential equations:
    \begin{align}\label{eq:gradwg}
       \frac{\partial}{\partial t} \nabla^L_{\breve{\hat{W}}}\tilde g &= -\ad_{\tilde\xi} (\nabla^L_{\breve{\hat{W}}}\tilde g)+\nabla_{\breve{\hat{W}}}\tilde\xi\in\R^{n\times nm},\\\nonumber
       \frac{\partial}{\partial t} \nabla_{\breve{\hat{W}}}\tilde \xi &= -A \nabla_{\breve{\hat{W}}}\tilde \xi-k_p(\Hes^L_{\tilde g}\psi)\nabla^L_{\breve{\hat{W}}}\tilde g\\
       &+\left[
           \sigma_1\bB ~\cdots~\sigma_m\bB
       \right]_{y=\hat Vx}\in\R^{n\times nm}.\label{eq:gradwxi}
    \end{align}
\end{lemma}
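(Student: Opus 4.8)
The plan is to exploit the fact, noted just before the statement, that the closed-loop flow depends smoothly on the weights, so that $\hat W\mapsto\big(\tilde g(t;\hat W),\tilde\xi(t;\hat W)\big)$ (with $\hat V$ and the initial data held fixed) is, for each $t$, a smooth map into $G\times\R^n$; then I would differentiate the error kinematics \eqref{error-dynamicsm1} and the feedback-transformed dynamics \eqref{er-2} in the entries of $\breve{\hat W}$, interchange these derivatives with $\partial/\partial t$, left-translate, and reassemble column by column. The interchange of $\partial_t$ and $\partial_{\breve{\hat W}}$ is justified by smooth dependence of ODE solutions on parameters; since $G$ is embedded in $\R^{N\times N}$, all manipulations may be carried out in $\R^{N\times N}$ and the results read off in the left-trivialized basis.

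First I would fix one scalar entry $w_a=\hat W_{ij}$ of $\breve{\hat W}$ and set $\zeta_a\coloneqq(\tilde g^{-1}\partial_{w_a}\tilde g)^\vee\in\R^n$; since $\partial_{w_a}\tilde g\in T_{\tilde g}G$, Definition \ref{def:JAcobian} identifies $\zeta_a$ with the $a$-th column of $\nabla^L_{\breve{\hat W}}\tilde g$, and $\partial_{w_a}\tilde\xi$ with the $a$-th column of $\nabla_{\breve{\hat W}}\tilde\xi$. For the kinematics I differentiate $\dot{\tilde g}=\tilde g\tilde\xi^\wedge$ in $w_a$, interchange with $\partial/\partial t$, and use $\partial_t(\tilde g^{-1})=-\tilde\xi^\wedge\tilde g^{-1}$ to get
\begin{align*}
\partial_t\big(\tilde g^{-1}\partial_{w_a}\tilde g\big)=\zeta_a^\wedge\tilde\xi^\wedge-\tilde\xi^\wedge\zeta_a^\wedge+(\partial_{w_a}\tilde\xi)^\wedge=-[\tilde\xi^\wedge,\zeta_a^\wedge]+(\partial_{w_a}\tilde\xi)^\wedge .
\end{align*}
Applying $\vee$ and the definition of $\ad$ gives $\partial_t\zeta_a=-\ad_{\tilde\xi}\zeta_a+\partial_{w_a}\tilde\xi$, and stacking over $a$ produces \eqref{eq:gradwg}.

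For \eqref{eq:gradwxi} I differentiate \eqref{er-2} in $w_a$ term by term. The term $-A\tilde\xi$ gives $-A\,\partial_{w_a}\tilde\xi$. The term $-k_p(\der^L_{\tilde g}\psi)^T$ depends on $w_a$ only through $\tilde g$, and its variation along $\partial_{w_a}\tilde g$ --- whose left-trivialized representative is $\zeta_a$ --- equals $-k_p\,\Der^L_{\tilde g}\big((\der^L_{\tilde g}\psi)^T\big)\zeta_a=-k_p(\Hes^L_{\tilde g}\psi)\,\zeta_a$ directly from the definition of the left-translated Hessian. The terms $\bB(-W\sigma(Vx)+\cE)$ and $d$ do not depend on $\hat W$ --- here $\bB=\bI^{-1}\bE$, the ideal weights $W,V$, the residual $\cE$, the disturbance $d$, and the NN input $x$ are all weight-independent --- so they vanish under $\partial_{w_a}$. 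Finally $\partial_{w_a}\big(\bB\hat W\sigma(\hat V x)\big)=\bB\,(e_ie_j^T)\,\sigma(\hat V x)=\sigma_j(\hat V x)\,\bB e_i$; ordering columns as in the breve operator \eqref{eq:breve}, each block of $n$ consecutive columns (those with common index $j$) assembles to $\sigma_j\bB$, so this contributes the block $[\sigma_1\bB~\cdots~\sigma_m\bB]_{y=\hat V x}$. Combining all terms with $\partial_t\partial_{w_a}\tilde\xi=\partial_{w_a}\dot{\tilde\xi}$ and stacking over $a$ gives \eqref{eq:gradwxi}.

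I expect the main obstacle to be two pieces of care rather than any deep difficulty: (i) performing the chain rule through the Lie-group-valued variable $\tilde g$ correctly --- this is exactly what the left-translated derivative and Hessian of Section \ref{prelim} (cf. \eqref{eq:derf}) are designed for, and it is what forces the $\ad_{\tilde\xi}$ and $\Hes^L_{\tilde g}\psi$ terms to appear, noting in particular that $\Hes^L_{\tilde g}\psi$ need not be symmetric; and (ii) the index bookkeeping of the breve vectorization, needed to collapse $\partial_{\breve{\hat W}}\big(\bB\hat W\sigma(\hat V x)\big)$ into the clean block $[\sigma_1\bB~\cdots~\sigma_m\bB]$. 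Everything else --- the interchange of $\partial_t$ with the weight-derivative, and the termwise differentiation --- is routine.
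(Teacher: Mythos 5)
Your proposal is correct and follows essentially the same route as the paper's own proof: columnwise differentiation in the entries of $\breve{\hat W}$, the product rule on $\tilde g^{-1}\partial_{\hat W_{ij}}\tilde g$ yielding the $-\ad_{\tilde\xi}$ term, the left-translated Hessian for the $-k_p(\der^L_{\tilde g}\psi)^T$ term, and the observation that $\partial_{\hat W_{ij}}(\bB\hat W\sigma(\hat Vx))=\sigma_j\bB e_i$, reassembled into the blocks $[\sigma_1\bB~\cdots~\sigma_m\bB]$ via the breve ordering. Your added remarks on interchanging $\partial_t$ with the weight derivative and on treating $x$, $W$, $V$, $\cE$, $d$ as weight-independent simply make explicit what the paper leaves implicit.
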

\begin{proof}
    The proof is presented in Appendix \ref{app:lem3&4}.
\end{proof}

\begin{lemma}\label{lem:gradV}
    Given the error dynamics \eqref{er-2} and the error kinematics \eqref{error-dynamicsm1}, the time-evolution of the gradient of the system states with respect to the NN weights in $\hat V$ satisfy the following system of matrix differential equations:
   \begin{align}
       \frac{\partial}{\partial t} \nabla^L_{\breve{\hat{V}}}\tilde g &= -\ad_{\tilde\xi} (\nabla^L_{\breve{\hat{V}}}\tilde g)+\nabla_{\breve{\hat{V}}}\tilde\xi\in\R^{n\times mM},\\\nonumber
       \frac{\partial}{\partial t} \nabla_{\breve{\hat{V}}}\tilde \xi &= -A \nabla_{\breve{\hat{V}}}\tilde \xi-k_p(\Hes^L_{\tilde g}\psi)\nabla^L_{\breve{\hat{V}}}\tilde g\\ \label{eq:gradxi}
       &+\bB\hat W\left[
           x_{1}\Pi ~\cdots~x_{M}\Pi
       \right]\in\R^{n\times mM},
    \end{align}
    where 
$$\Pi\coloneqq\left.\nabla_y\sigma\right|_{y=\hat Vx}=\1_{m\times m}-\begin{bmatrix}
    \sigma_1^2 & \0 & \0\\
    \0 & \ddots & \0\\
    \0 & \0 & \sigma_m^2
\end{bmatrix}_{y=\hat Vx}.$$
\end{lemma}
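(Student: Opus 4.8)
The plan is to mirror the argument used for Lemma \ref{lem:gradW} in Appendix \ref{app:lem3\&4}, now differentiating the integral curves of \eqref{er-2} and \eqref{error-dynamicsm1} with respect to the entries of $\hat V$. Since the right-hand sides of \eqref{er-2}--\eqref{error-dynamicsm1} are smooth in the state and in the weights on any compact time interval, the flow $(\tilde g(t;\hat W,\hat V),\tilde\xi(t;\hat W,\hat V))$ depends smoothly on the parameters, so the partial derivatives $\partial_t$ and $\partial/\partial\varphi_\ell$ commute, where we write $\varphi=\breve{\hat V}\in\R^{mM}$ and $\partial_\ell:=\partial/\partial\varphi_\ell$.

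First I would establish the $\tilde g$-equation. For each $\ell$, interchanging derivatives gives $\partial_\ell\dot{\tilde g}=(\partial_\ell\tilde g)\tilde\xi^\wedge+\tilde g(\partial_\ell\tilde\xi)^\wedge$, and using $\partial_t(\tilde g^{-1})=-\tilde\xi^\wedge\tilde g^{-1}$ one computes
\[
\partial_t\big(\tilde g^{-1}\partial_\ell\tilde g\big)=-\big[\tilde\xi^\wedge,\,\tilde g^{-1}\partial_\ell\tilde g\big]+(\partial_\ell\tilde\xi)^\wedge .
\]
Applying $\vee$ and recalling from Definition \ref{def:JAcobian} that the $\ell$-th column of $\nabla^L_{\breve{\hat V}}\tilde g$ is $(\tilde g^{-1}\partial_\ell\tilde g)^\vee$ and the $\ell$-th column of $\nabla_{\breve{\hat V}}\tilde\xi$ is $\partial_\ell\tilde\xi$, stacking the columns for $\ell=1,\dots,mM$ yields the first claimed identity. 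It has the same form as in the $\hat W$-case because \eqref{error-dynamicsm1} depends on the weights only through $\tilde\xi$.

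Next I would differentiate \eqref{er-2} with respect to $\varphi_\ell$. Only the terms $-A\tilde\xi$, $-k_p(\der^L_{\tilde g}\psi)^T$, and $\bB\hat W\sigma(\hat Vx)$ depend on $\hat V$; the remaining contribution $\bB(-W\sigma(Vx)+\cE)+d$ and the matrix $\bB=\bI^{-1}\bE$ are independent of $\hat V$. For the configuration term, the chain rule together with the definition $\Hes^L_{\tilde g}\psi=\Der^L_{\tilde g}((\der^L_{\tilde g}\psi)^T)$ and the identity $\dot f=(\Der^L_g f)\xi$ give $\partial_\ell(\der^L_{\tilde g}\psi)^T=(\Hes^L_{\tilde g}\psi)(\tilde g^{-1}\partial_\ell\tilde g)^\vee$, i.e. the $\ell$-th column of $(\Hes^L_{\tilde g}\psi)\nabla^L_{\breve{\hat V}}\tilde g$. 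For the activation term, with $y=\hat Vx$ one has $\partial(\hat Vx)/\partial\hat V_{pq}=x_q e_p$, hence $\partial\sigma(\hat Vx)/\partial\hat V_{pq}=x_q\,\Pi e_p$ where $\Pi=\nabla_y\sigma|_{y=\hat Vx}$; ordering the index pairs $(p,q)$ as dictated by the \textit{breve} operator \eqref{eq:breve} collects these columns into the block matrix $[x_1\Pi~\cdots~x_M\Pi]$, and left-multiplication by $\bB\hat W$ produces the last term. Reassembling the columns gives the second claimed identity.

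The main obstacle is bookkeeping rather than analysis: one must track the column ordering induced by $\breve{\hat V}$ consistently through all three terms on the right-hand side (especially the $\Pi$-block), and carefully justify the interchange of $\partial_t$ with the parameter derivatives, which rests on the smooth dependence of solutions of \eqref{er-2}--\eqref{error-dynamicsm1} on $\hat V$ over the relevant compact time interval. No estimate beyond those already appearing in the proof of Lemma \ref{lem:gradW} is required, so the two lemmas can in fact be proved together.
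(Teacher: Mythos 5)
Your argument is essentially the paper's own proof: differentiate the flow entrywise in $\hat V$ (the paper uses $\hat V_{ij}$, you use $\varphi_\ell=\breve{\hat V}_\ell$), commute the time and parameter derivatives, use $\partial_\ell\bigl(\der^L_{\tilde g}\psi\bigr)^T=(\Hes^L_{\tilde g}\psi)(\tilde g^{-1}\partial_\ell\tilde g)^\vee$ and $\partial\sigma(\hat Vx)/\partial\hat V_{pq}=x_q\Pi e_p$, and reassemble the columns in the \textit{breve} ordering to get $\bB\hat W[x_1\Pi~\cdots~x_M\Pi]$, exactly as in Appendix B. The only detail you leave implicit is the one-line verification that for the sigmoid \eqref{eq:sigma} one has $\partial\sigma_i/\partial y_i=1-\sigma_i^2$, which is what gives the stated diagonal form of $\Pi$.
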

\begin{proof}
    The proof is presented in Appendix \ref{app:lem3&4}.
\end{proof}

 We use the following cost function to propose appropriate learning rules for training the NN weights:
\begin{align}
  F(\tilde g,\tilde\xi,t,\hat W,\hat V) &=   \frac{\alpha_{1}}{2} (\tilde{\xi}-\tilde{\xi}^*)^T A (\tilde{\xi}-\tilde{\xi}^*)+ \alpha_{2} \psi((\tilde{g}^*)^{-1}\tilde g), \label{obj}
\end{align}
This function is a weighted summation of the norm square of the difference between the actual and nominal velocity errors and the error function applied to the difference between the actual and nominal configuration errors. Here, $\alpha_i>0$ $(i=1,2)$ is the associated weight to each term.  
Note that $F$ is indirectly a function of the approximated weights $\hat W$ and $\hat V$ through defining $u$ based on \eqref{opt-con2} and considering the state trajectories as a variation of the system evolution under the ideal controller. In addition, this function is designed to ensure that the trajectory errors converge to zero in the same manner as in the nominal system.
Therefore, we define the learning rules by:
\begin{align}
  \dot{\hat{W}} &= -\rho_{1} \nabla_{\hat W} F-\gamma_{1} \bigg(\|\tilde{\xi}-\tilde{\xi}^*\|\left(1+k_p\|\Hes_{\tilde g}^L\psi\|\|\nabla_{\breve{\hat{W}}}^L\tilde g\|\right) \nonumber\\
  &+\|\der^L_{(\tilde{g}^*)^{-1}\tilde g} \psi\|\|\nabla_{\breve{\hat{W}}}^L\tilde g\|\bigg) \hat{W},  \label{w1}\\
  \dot{\hat{V}}&=-\rho_2 \Pi\hat{W}^T\hat{W}\sigma(\hat{V}x)x^T-\gamma_{2} \|\tilde{\xi}\| \hat{V}.\label{v3}
\end{align}
Here, $\rho_1,\rho_2>0$ and $\gamma_1,\gamma_2>0$ are learning parameter.

\begin{lemma}\label{lem:gradu}
    The gradient of the norm of the control effort with respect to $\hat{V}$ is found by
    \begin{align}
        \nabla_{\hat{V}}\Big(\frac{1}{2}u^T u\Big)=\Pi\hat{W}^T\hat{W}\sigma(\hat{V}x)x^T.
    \end{align}
\end{lemma}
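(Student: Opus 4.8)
The plan is to obtain the identity by a direct chain rule, factoring the dependence of $\tfrac12 u^Tu$ on $\hat V$ through two intermediate quantities: the pre-activation $y\coloneqq\hat Vx\in\R^m$ and the activation $s\coloneqq\sigma(y)=\sigma(\hat Vx)\in\R^m$. With this notation $u=\hat Ws$, so that $\tfrac12 u^Tu=\tfrac12 s^T(\hat W^T\hat W)s$ depends on $\hat V$ only through $y$, and on $y$ only through $s$. Consequently $\nabla_{\hat V}\big(\tfrac12 u^Tu\big)$ is computed by composing three elementary Jacobians, and the task reduces to writing each of them down and reassembling the result through the matrix-gradient convention \eqref{eq:defgradF}.

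First I would differentiate with respect to $s$: since $\hat W^T\hat W$ is symmetric, $\nabla_s\big(\tfrac12 s^T\hat W^T\hat W s\big)=\hat W^T\hat W s=\hat W^T\hat W\sigma(\hat Vx)$. Next I would propagate through $s=\sigma(y)$; because $\sigma$ acts componentwise, its Jacobian evaluated at $y=\hat Vx$ is exactly the diagonal matrix $\Pi=\1_{m\times m}-\diag(\sigma_1^2,\dots,\sigma_m^2)$ already introduced in Lemma \ref{lem:gradV}, and $\Pi^T=\Pi$, so $\nabla_y\big(\tfrac12 u^Tu\big)=\Pi\,\hat W^T\hat W\sigma(\hat Vx)\in\R^m$. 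Finally I would propagate through $y=\hat Vx$: from $y_i=\sum_k\hat V_{ik}x_k$ we have $\partial y_i/\partial\hat V_{jk}=\delta_{ij}x_k$, hence the $(j,k)$ partial of $\tfrac12 u^Tu$ equals $\big[\Pi\hat W^T\hat W\sigma(\hat Vx)\big]_j\,x_k$, i.e. the matrix of partials is $\big(\Pi\hat W^T\hat W\sigma(\hat Vx)\big)x^T$. Reassembling these partials via the identification $\nabla_{\hat V}F=(\nabla_{\breve{\hat V}}F)^{\breve\breve}$ of \eqref{eq:defgradF} gives $\nabla_{\hat V}\big(\tfrac12 u^Tu\big)=\Pi\hat W^T\hat W\sigma(\hat Vx)x^T$, as claimed.

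There is no genuine obstacle here; the only care required is bookkeeping — confirming that $u^Tu$ carries no $\hat V$-dependence beyond $s$, keeping the order of the Jacobian factors consistent with the \textit{breve} reshaping convention so that the left/right multiplications are not inadvertently transposed, and invoking the symmetry of $\hat W^T\hat W$ and of $\Pi$ at the appropriate steps. The lemma is purely a computational ingredient: it identifies the $\nabla_{\hat V}$-gradient of the control-effort penalty $\tfrac12 u^Tu$ with the first term on the right-hand side of the learning rule \eqref{v3}, so that \eqref{v3} can be read as gradient descent driven by this control-effort regularizer together with the $e$-modification term $-\gamma_2\|\tilde\xi\|\hat V$.
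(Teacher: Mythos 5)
Your proposal is correct and follows essentially the same route as the paper's proof: a direct chain-rule computation of the entries $\partial\big(\tfrac{1}{2}u^Tu\big)/\partial\hat V_{jk}$ using the diagonal Jacobian $\Pi$ of the sigmoid and the symmetry of $\Pi$ (the paper merely organizes it as $u^T\,\partial u/\partial\hat V_{ij}$ entry by entry, obtaining $D^Tx^T$ with $D=\sigma^T\hat W^T\hat W\Pi$, before invoking $\Pi^T=\Pi$). No gaps; the bookkeeping through the \textit{breve} convention matches \eqref{eq:defgradF} as required.
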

\begin{proof}
    The proof is presented in Appendix \ref{app:lem5}.
\end{proof}
\begin{remark}
    The learning rule in \eqref{w1}-\eqref{v3} is based on the popular gradient decent adaptation rule with dissipation. It should be noted that \eqref{v3} is indeed based on the gradient descent rule considering only the norm square of the control signal (see Lemma \ref{lem:gradu}). The learning rule is then designed in a manner that finding optimal $\hat W$ will enhance the control performance, whilst the optimum $\hat V$ will help reduce the amount of control effort.
\end{remark}
\begin{assumption}\label{as:staticapprox} (Static Approximation) 
    We assume that the rate of convergence of the NN weights $\hat{W}$ and $\hat{V}$ is much higher than that of the controller, i.e., we assume that the following static approximation conditions hold:
    \begin{align}
         \frac{\partial}{\partial t} \nabla_{\breve{\hat{W}}}\tilde \xi\equiv \0_{n\times nm} ~~~~~\text{\&}~~~~~\frac{\partial}{\partial t} \nabla_{\breve{\hat{V}}}\tilde \xi\equiv \0_{n\times mM}.
    \end{align}
\end{assumption}
\begin{assumption}
\label{as:ApproxInertia}
    To have well-defined learning rules, we assume knowledge of a positive lower bound for the minimum eigenvalue of the time-varying matrix $\bB^{-1}$, i.e., $0<\varsigma\leq\lambda_{min}(\bB^{-1})$. Then we use $1/\varsigma\1_{n\times n}$ instead of $\bB$ to evaluate the adaptation laws for the NN controller.
\end{assumption}
Based on Lemmas \ref{lem:gradW} and \ref{lem:gradV}, we explicitly calculate the learning rule \eqref{w1} in the following proposition.
\begin{proposition}
    Given Assumptions \ref{as:staticapprox} and \ref{as:ApproxInertia} and the cost function $F$ in \eqref{obj}, the learning rule \eqref{w1} reads:
    \begin{align}\nonumber
        \dot{\hat{W}}&=-\frac{\rho_1\alpha_1}{\varsigma}(\tilde\xi-\tilde\xi^*)\sigma^T +k_p\rho_1\alpha_1\left[\left(\left(\Hes_{\tilde g}^L\psi\right)\nabla^L_{\breve{\hat{W}}}\tilde g \right)^T(\tilde\xi-\tilde\xi^*)\right]\breve\breve\\
        &- \rho_1\alpha_2\left[\left(\left(\der_{(\tilde g^*)^{-1}\tilde g}^L\psi\right)\nabla^L_{\breve{\hat{W}}}\tilde g\right)^T\right]\breve\breve\nonumber
        -\gamma_{1} \bigg(\|\der^L_{(\tilde g^*)^{-1}\tilde g}\psi\| \|\nabla_{\breve{\hat{W}}}^L\tilde g\|\\ &+\|\tilde\xi-\tilde\xi^*\|\left(1+k_p\|\Hes_{\tilde g}^L\psi\|\|\nabla_{\breve{\hat{W}}}\tilde g\|\right)\bigg) \hat{W},\label{w3}
    \end{align}
    where at every $(\tilde g, \tilde\xi)$ and NN input $x\in\R^M$, $\nabla^L_{\breve{\hat{W}}}\tilde g$ is found by solving the matrix differential equation
    \begin{align}\nonumber
        \frac{\partial}{\partial t} \nabla^L_{\breve{\hat{W}}}\tilde g = &-\ad_{\tilde\xi} (\nabla^L_{\breve{\hat{W}}}\tilde g)-A^{-1}\Big(k_p(\Hes_{\tilde g}^L\psi)\nabla^L_{\breve{\hat{W}}}\tilde g\\ &-\left[
           \frac{\sigma_1}{\varsigma}\1_{n\times n} ~\cdots~\frac{\sigma_m}{\varsigma}\1_{n\times n}
       \right]_{y=\hat Vx}\Big).\label{eq:gradwg1}
    \end{align}
\end{proposition}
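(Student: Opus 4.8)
The plan is to evaluate the abstract rule \eqref{w1} by computing $\nabla_{\hat W}F$ explicitly with the chain rule, exploiting that the cost \eqref{obj} depends on $\hat W$ only through the variation $\bigl(\tilde g(t;\hat W,\hat V),\tilde\xi(t;\hat W,\hat V)\bigr)$ of the closed-loop trajectory, whereas the nominal signals $\tilde g^*,\tilde\xi^*$ are, by construction, independent of the weights. First I would differentiate the quadratic term: since $A=A^T$, $\nabla_{\breve{\hat W}}\bigl(\tfrac{\alpha_1}{2}(\tilde\xi-\tilde\xi^*)^TA(\tilde\xi-\tilde\xi^*)\bigr)=\alpha_1(\tilde\xi-\tilde\xi^*)^TA\,\nabla_{\breve{\hat W}}\tilde\xi$. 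For the configuration term, set $h\coloneqq(\tilde g^*)^{-1}\tilde g$; since $\tilde g^*$ does not depend on the weights, a variation $\delta\breve{\hat W}$ produces $h^{-1}\delta h=\tilde g^{-1}\delta\tilde g$, so by Definition \ref{def:JAcobian} the left-trivialized variation of $h$ is $(\nabla^L_{\breve{\hat W}}\tilde g)\,\delta\breve{\hat W}$, and by \eqref{eq:dir-der}, $\nabla_{\breve{\hat W}}\bigl(\psi(h)\bigr)=(\der^L_h\psi)\,\nabla^L_{\breve{\hat W}}\tilde g$. Hence $\nabla_{\breve{\hat W}}F=\alpha_1(\tilde\xi-\tilde\xi^*)^TA\,\nabla_{\breve{\hat W}}\tilde\xi+\alpha_2(\der^L_h\psi)\,\nabla^L_{\breve{\hat W}}\tilde g$.

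Next I would remove $\nabla_{\breve{\hat W}}\tilde\xi$ from this expression: imposing the static-approximation Assumption \ref{as:staticapprox}, $\tfrac{\partial}{\partial t}\nabla_{\breve{\hat W}}\tilde\xi\equiv\0$ turns \eqref{eq:gradwxi} into the algebraic identity $A\,\nabla_{\breve{\hat W}}\tilde\xi=[\sigma_1\bB\ \cdots\ \sigma_m\bB]_{y=\hat Vx}-k_p(\Hes^L_{\tilde g}\psi)\,\nabla^L_{\breve{\hat W}}\tilde g$, whence $\nabla_{\breve{\hat W}}\tilde\xi=A^{-1}\bigl([\sigma_1\bB\ \cdots\ \sigma_m\bB]_{y=\hat Vx}-k_p(\Hes^L_{\tilde g}\psi)\,\nabla^L_{\breve{\hat W}}\tilde g\bigr)$, and Assumption \ref{as:ApproxInertia} lets us substitute $\tfrac1\varsigma\1_{n\times n}$ for $\bB$. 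Plugging this into $\nabla_{\breve{\hat W}}F$, the leading $A$ cancels $A^{-1}$, giving $\nabla_{\breve{\hat W}}F=\tfrac{\alpha_1}{\varsigma}(\tilde\xi-\tilde\xi^*)^T[\sigma_1\1_{n\times n}\ \cdots\ \sigma_m\1_{n\times n}]-\alpha_1k_p(\tilde\xi-\tilde\xi^*)^T(\Hes^L_{\tilde g}\psi)\,\nabla^L_{\breve{\hat W}}\tilde g+\alpha_2(\der^L_h\psi)\,\nabla^L_{\breve{\hat W}}\tilde g$. Applying the reshaping map of \eqref{eq:defgradF} (recalling that $\breve{(\cdot)}$ stacks columns), the first block becomes the $n\times m$ matrix whose $j$-th column is $\tfrac{\sigma_j}{\varsigma}(\tilde\xi-\tilde\xi^*)$, i.e.\ $\tfrac1\varsigma(\tilde\xi-\tilde\xi^*)\sigma^T$, while the last two blocks become the indicated $\breve\breve$ terms; multiplying by $-\rho_1$ and carrying over the dissipation term $-\gamma_1(\cdots)\hat W$ of \eqref{w1} verbatim reproduces \eqref{w3}.

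Finally, the companion matrix differential equation \eqref{eq:gradwg1} is obtained by inserting the same algebraic expression for $\nabla_{\breve{\hat W}}\tilde\xi$, with $\bB$ replaced by $\tfrac1\varsigma\1_{n\times n}$, into the first identity \eqref{eq:gradwg} of Lemma \ref{lem:gradW}, so that $\tfrac{\partial}{\partial t}\nabla^L_{\breve{\hat W}}\tilde g=-\ad_{\tilde\xi}(\nabla^L_{\breve{\hat W}}\tilde g)-A^{-1}\bigl(k_p(\Hes^L_{\tilde g}\psi)\,\nabla^L_{\breve{\hat W}}\tilde g-[\tfrac{\sigma_1}{\varsigma}\1_{n\times n}\ \cdots\ \tfrac{\sigma_m}{\varsigma}\1_{n\times n}]_{y=\hat Vx}\bigr)$.

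The main obstacle is bookkeeping rather than conceptual. One must stay consistent about the column-stacking convention of $\breve{(\cdot)}$ and its inverse $\breve\breve$ when passing from the $1\times nm$ Jacobian $\nabla_{\breve{\hat W}}F$ to the $n\times m$ matrix $\nabla_{\hat W}F$, track transposes carefully when moving $(\tilde\xi-\tilde\xi^*)$ past $\Hes^L_{\tilde g}\psi$ and $\nabla^L_{\breve{\hat W}}\tilde g$ (keeping in mind that $\Hes^L$ need not be symmetric), and---most delicately---justify the variational chain rule on the Lie-group-valued factor $h=(\tilde g^*)^{-1}\tilde g$: it is the identity $h^{-1}\delta h=\tilde g^{-1}\delta\tilde g$ that makes $\nabla^L_{\breve{\hat W}}\tilde g$ itself, rather than an adjoint-conjugated version of it, the correct object to contract with $\der^L_h\psi$.
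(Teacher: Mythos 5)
Your proposal is correct and follows essentially the same route as the paper's own proof: you impose Assumption \ref{as:staticapprox} to turn \eqref{eq:gradwxi} into an algebraic expression for $\nabla_{\breve{\hat W}}\tilde\xi$ (with $\bB$ replaced by $\tfrac{1}{\varsigma}\1_{n\times n}$ via Assumption \ref{as:ApproxInertia}), apply the chain rule to both terms of $F$ using $\nabla^L_{\breve{\hat{W}}}\big((\tilde g^*)^{-1}\tilde g\big)=\nabla^L_{\breve{\hat{W}}}\tilde g$, cancel $A$ against $A^{-1}$, reshape via \eqref{eq:defgradF}, and substitute back into \eqref{eq:gradwg} to obtain \eqref{eq:gradwg1}. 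The only (welcome) addition is your explicit justification $h^{-1}\delta h=\tilde g^{-1}\delta\tilde g$ for the Lie-group-valued chain-rule step, which the paper states more tersely as a consequence of the nominal dynamics being independent of $\hat W$.
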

\begin{proof}
    The proof is presented in Appendix \ref{app:prop2}.
\end{proof}

Now that we have introduced the learning rules, the control loop is closed by feeding back the error states and feeding forward the desired trajectory and its derivatives, as required to approximate the ideal controller \eqref{opt-cont1}. That is, we define 
\begin{align}\label{xx}
    x= [\breve{\tilde{g}}^T~\tilde\xi^T~\breve{{g}}_d^T~\xi_d^T~\dot\xi_d^T]^T.
\end{align}
By adding and subtracting $\bB W \sigma(\hat{V} x)$ in \eqref{er-2} and defining $\tilde W\coloneqq W-\hat W$, then the closed-loop system can be stated as
\begin{eqnarray}
  \dot{\tilde{\xi}} =  -A \tilde{\xi}-k_p(\der^L_{\tilde g}\psi)^T-\bB\tilde{W} \sigma(\hat{V} x)+w\label{er-3}
\end{eqnarray}
where  $w= \bB[W(\sigma(\hat V x)-\sigma({V} x))+\cE]$. Considering the Universal Approximation property of NNs, the ideal weights of the NN, $\sigma$ in \eqref{eq:sigma} as the activation function, and NN approximation error are all bounded. Therefore, there exists an unknown positive upper bound such that $\|w\|\leq \bar{w}$.
%
\begin{thm}\label{thm1}
  Consider the uncertain and disturbed Euler-Poincar\'{e} system \eqref{dynamicsm1}-\eqref{dynamicsm2} on the matrix Lie group $G$ that follows a bounded smooth desired trajectory $t\to g_d(t)\in G$. Let $\psi$ be a locally quadratic configuration error function for the system with constants $\Theta,b_1,b_2>0$. The controller \eqref{opt-con2} with input states defined in \eqref{xx} and learning rules \eqref{v3} \& \eqref{w3}-\eqref{eq:gradwg1} guarantees that the weight error $\tilde W$ (and hence $\hat W$) is globally ultimately bounded with the bound 
  \begin{align}
      \!r_1\!\!\!= \max\{\frac{\rho_1\alpha_1 \sqrt{m}}{\varsigma\gamma_1}, \frac{\rho_1\alpha_1 }{\gamma_1},   \frac{\rho_1\alpha_2}{\gamma_1}  \}+\varpi>0,
  \end{align}
  where $\varpi\geq\|W\|$ is the bound for the ideal NN weights.
Further, let us define the following constants: 
\begin{align*}
a_1&\coloneqq\lambda_{min}(A),\\
a_2&\coloneqq\lambda_{max}(A),\\
\bar\upsilon&\coloneqq\sqrt{\frac{b_2(1+2b_1)}{\theta_0}}(\frac{\sqrt{m}r_1}{\varsigma}+\bar w),~~~~~0<\theta_0<\Theta\\
    h&\coloneqq\!\!\!\!\!\!\sup_{\tilde g\in\psi^{-1}(\leq\theta_0)}\!\!\!\|\Hes_{\tilde g}^L\psi\|<+\infty,\\
    \bar\epsilon&\coloneqq\min\{\frac{a_1}{h+a_2^2/4},\sqrt{2b_1},\frac{a_1-\bar\upsilon}{h+a_2/2}\}.
\end{align*}
If the system's initial condition satisfies $\psi(\tilde g(0))<\theta_0$ and the control parameters are designed such that $a_2\geq a_1>\bar\upsilon$ and $k_p>\frac{a_1}{\bar\epsilon}-h$, then the configuration and velocity errors $(\tilde g,\tilde\xi)$, as well as the approximated weight matrix $\hat V$ remain uniformly ultimately bounded.
\end{thm}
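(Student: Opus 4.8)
The plan is to use Lyapunov's direct method on the closed-loop system \eqref{error-dynamicsm1}, \eqref{er-3} together with the learning rules \eqref{v3}, \eqref{w3}--\eqref{eq:gradwg1}, organized as a cascade: first drive the weight error $\tilde W = W-\hat W$ into a fixed ball \emph{independently} of the state, then treat the residual of $\tilde W$ (and of $w$) as a bounded perturbation of the tracking subsystem $(\tilde g,\tilde\xi)$, and finally close the loop on $\hat V$ using the state bounds from the second stage.

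\textbf{Stage 1 (the weight error $\tilde W$).} Take $V_W = \tfrac12\|\tilde W\|^2 = \tfrac12\tr(\tilde W^T\tilde W)$. Since $W$ is constant, $\dot V_W = -\tr(\tilde W^T\dot{\hat W})$ with $\dot{\hat W}$ given explicitly by \eqref{w3}. The three gradient-descent terms of \eqref{w3} have Frobenius norms bounded, using $\|\sigma\|\le\sqrt m$ and Cauchy--Schwarz, by $\tfrac{\rho_1\alpha_1\sqrt m}{\varsigma}\|\tilde\xi-\tilde\xi^*\|$, by $k_p\rho_1\alpha_1\|\Hes^L_{\tilde g}\psi\|\,\|\nabla^L_{\breve{\hat W}}\tilde g\|\,\|\tilde\xi-\tilde\xi^*\|$, and by $\rho_1\alpha_2\|\der^L_{(\tilde g^*)^{-1}\tilde g}\psi\|\,\|\nabla^L_{\breve{\hat W}}\tilde g\|$; meanwhile the dissipation term, after writing $\hat W = W-\tilde W$ and using $\|W\|\le\varpi$, contributes $-\gamma_1 C(\tilde g,\tilde\xi)\big(\|\tilde W\|^2-\varpi\|\tilde W\|\big)$ with $C$ exactly the bracket multiplying $\gamma_1$ in \eqref{w1}. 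Collecting terms by the three state-dependent factors $\|\tilde\xi-\tilde\xi^*\|$, $k_p\|\Hes^L_{\tilde g}\psi\|\|\nabla^L_{\breve{\hat W}}\tilde g\|\|\tilde\xi-\tilde\xi^*\|$ and $\|\der^L_{(\tilde g^*)^{-1}\tilde g}\psi\|\|\nabla^L_{\breve{\hat W}}\tilde g\|$, each occurs with a coefficient of the form $(\text{const})+\gamma_1\varpi-\gamma_1\|\tilde W\|$, so $\dot V_W<0$ whenever $\|\tilde W\|>r_1$ with $r_1$ as stated; hence $\tilde W$ (and $\hat W$) is globally ultimately bounded. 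It is worth emphasizing that this bound does not depend on the size of $\nabla^L_{\breve{\hat W}}\tilde g$ — by design of the dissipation in \eqref{w1} the factor $\|\nabla^L_{\breve{\hat W}}\tilde g\|$ appears symmetrically on both sides — so no well-posedness estimate for \eqref{eq:gradwg1} is needed here.

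\textbf{Stage 2 (the tracking error $\tilde g,\tilde\xi$).} Once $\|\tilde W\|\le r_1$, the term $\bB\tilde W\sigma(\hat V x)$ in \eqref{er-3} is bounded by $\sqrt m\, r_1/\varsigma$ (Assumption \ref{as:ApproxInertia}), which added to $\|w\|\le\bar w$ gives a bounded disturbance of magnitude $\le \sqrt m\, r_1/\varsigma + \bar w$. I would then use the strict Lyapunov candidate $V = \tfrac12\tilde\xi^T\tilde\xi + k_p\psi(\tilde g) + \bar\epsilon\,(\der^L_{\tilde g}\psi)\tilde\xi$; on the compact sublevel set $\psi^{-1}(\le\theta_0)$, the quadratic property \eqref{eq:b1b2} together with $\bar\epsilon\le\sqrt{2b_1}$ makes $V$ positive definite with quadratic upper and lower bounds in $(\|\tilde\xi\|,\|\der^L_{\tilde g}\psi\|)$. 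Differentiating along \eqref{error-dynamicsm1} and \eqref{er-3}, using $\dot\psi=(\der^L_{\tilde g}\psi)\tilde\xi$ and $\tfrac{d}{dt}(\der^L_{\tilde g}\psi)^T=(\Hes^L_{\tilde g}\psi)\tilde\xi$, the two $\pm k_p(\der^L_{\tilde g}\psi)\tilde\xi$ terms cancel and the remainder is the negative of a quadratic form in $(\|\tilde\xi\|,\|\der^L_{\tilde g}\psi\|)$ whose coefficients involve $a_1=\lambda_{\min}(A)$, $a_2=\lambda_{\max}(A)$, the Hessian bound $h$ and $\bar\epsilon k_p$, plus a disturbance cross-term. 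Requiring this form to be negative definite yields $k_p>a_1/\bar\epsilon-h$, and requiring the disturbance cross-term to be dominated after normalizing via \eqref{eq:b1b2} — which is where the factor $\sqrt{b_2(1+2b_1)/\theta_0}$, hence $\bar\upsilon$, enters — yields $a_1>\bar\upsilon$; the three arguments of the minimum defining $\bar\epsilon$ correspond exactly to these sub-requirements (positivity of $V$, sign-definiteness of the $2\times2$ form already at $k_p\ge1$, and domination of the disturbance cross-term). A Nagumo/invariance argument then shows that a sublevel set of $V$ containing the initial condition and contained in $\psi^{-1}(\le\theta_0)$ (which exists because $\psi(\tilde g(0))<\theta_0$) is forward invariant, and standard ultimate-boundedness reasoning gives uniform ultimate boundedness of $(\tilde g,\tilde\xi)$.

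\textbf{Stage 3 ($\hat V$) and the main obstacle.} With $(\tilde g,\tilde\xi)$ bounded, $\|x\|$ is bounded through \eqref{xx} and $\|\hat W\|\le\varpi+r_1$; taking $V_V=\tfrac12\|\hat V\|^2$, \eqref{v3} gives $\dot V_V = -\rho_2\tr\!\big(\hat V^T\Pi\hat W^T\hat W\sigma(\hat V x)x^T\big)-\gamma_2\|\tilde\xi\|\|\hat V\|^2$, whose first term is uniformly bounded ($\|\Pi\|\le1$, $\|\sigma\|\le\sqrt m$); either the $-\gamma_2\|\tilde\xi\|\|\hat V\|^2$ dissipation or, on intervals where $\tilde\xi$ is small, boundedness inherited from the combined estimate on $V+V_V$ confines $\hat V$ to a ball, establishing its uniform ultimate boundedness. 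The main obstacle I anticipate is Stage 2: the constant bookkeeping must be carried through carefully so that the two structural hypotheses reduce \emph{exactly} to $a_2\ge a_1>\bar\upsilon$ and $k_p>a_1/\bar\epsilon-h$ with $\bar\epsilon$ the stated minimum, and the invariance step must guarantee the trajectory never leaves the region where \eqref{eq:b1b2} and the Hessian bound $h$ remain valid. A subordinate point is verifying that the cascade is genuinely acyclic — Stage 1 uses no state bound, and the Static-Approximation Assumption \ref{as:staticapprox} is precisely what lets \eqref{eq:gradwxi} and \eqref{eq:gradxi} be discarded so that \eqref{w3}--\eqref{eq:gradwg1} is a self-contained description of the weight dynamics.
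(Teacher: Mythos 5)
Your proposal is correct and follows essentially the same route as the paper: the same three-step cascade with $L_1=\tfrac12\|\tilde W\|^2$ yielding $r_1$, the cross-term-augmented candidate $\tfrac12\|\tilde\xi\|^2+k_p\psi(\tilde g)+\delta\,(\der^L_{\tilde g}\psi)\tilde\xi$ analyzed on the sublevel set where \eqref{eq:b1b2} and the Hessian bound $h$ hold, and a first-order bounded-coefficient argument for the hidden-layer weights. The only refinement in the paper is that the cross-term coefficient is not fixed at $\bar\epsilon$ but chosen strictly inside the interval $\left(\frac{a_1}{k_p+h},\,\bar\epsilon\right)$ --- nonempty precisely because $k_p>\frac{a_1}{\bar\epsilon}-h$ --- which is what makes the estimate $\lambda_{\min}(E)>a_1-\delta\,(h+a_2/2)>\bar\upsilon$ go through strictly and keeps the attracting ball strictly inside $\psi^{-1}(\leq\theta_0)$.
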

\begin{proof}

The proof follows $3$ steps to show the ultimate boundedness of (i) the weight error $\tilde W$, (ii) the configuration and velocity error $(\tilde g,\tilde\xi)$, and (iii) the weight error $\tilde V\coloneqq V-\hat{V}$:

\textbf{Step (i)}
To guarantee the ultimate boundedness of $\tilde W$, the following Lyapunov  candidate is considered
\begin{eqnarray}
  L_1 (\tilde W)=  \frac{1}{2}\| \tilde{W}\| ^2.\label{lyap1}
\end{eqnarray}
Taking the time derivative of $L_1$ along the direction of the vector field defining the closed-loop dynamics considering the learning rules, leads to
\begin{align}
  \dot{L}_1= \tr(\dot{\tilde{W}}^T \tilde{W}). \label{L1dp1}
\end{align}
Since $\dot{\tilde{W}}=-\dot{\hat{W}}$ and by substituting (\ref{w3}) to (\ref{L1dp1}) and defining $\bg\coloneqq(\tilde g^*)^{-1}\tilde g$ and $\Xi\coloneqq\tilde\xi-\tilde\xi^*$  we have
 \begin{align}\nonumber
 &\tr(\dot{\tilde{W}}^T \tilde{W}) = \tr\Bigg(\bigg(\rho_1\alpha_2\left[\left(\left(\der_{\bg}^L\psi\right)\nabla^L_{\breve{\hat{W}}}\tilde g\right)^T\right]\breve\breve\\
        &~~~+\frac{\rho_1\alpha_1}{\varsigma}\Xi(\sigma(\hat{V}x))^T -k_p \rho_1\alpha_1\left[\left(\left(\Hes_{\tilde g}^L\psi\right)\nabla^L_{\breve{\hat{W}}}\tilde g \right)^T\!\!\!\Xi\right]\breve\breve\nonumber\\
        &+\gamma_{1} \bigg(\|\Xi\|\left(1+k_p\|\Hes_{\tilde g}^L\psi\|\|\nabla_{\breve{\hat{W}}}\tilde g\|\right)+\|\der^L_{\bg} \psi\|\|\nabla_{\breve{\hat{W}}}^L\tilde g\|\bigg) \hat{W}\bigg)^{\!\!\! T} \tilde{W}\!\!\Bigg) \nonumber
  \end{align}
Now, by considering that $\sigma_i^2(.)\leq 1$ ($i=1,\ldots,m$), we lead to the following inequalities
\begin{align*}
 &\tr\Bigg(\tilde{W}^T\bigg(\frac{\rho_1\alpha_1}{\varsigma}\Xi(\sigma(\hat{V}x))^T\!\!\!\!-\!k_p\rho_1\alpha_1\left[\left(\left(\Hes_{\tilde g}^L\psi\right)\nabla^L_{\breve{\hat{W}}}\tilde g \right)^T\!\!\!\Xi\right]\breve\breve \\ &+ \rho_1\alpha_2\left[\left(\left(\der_{\bg}^L\psi\right)\nabla^L_{\breve{\hat{W}}}\tilde g\right)^T\right]\breve\breve\bigg)\Bigg) \leq  \bigg(\rho_1\alpha_2\|\der_{\bg}^L\psi\|\| \nabla^L_{\breve{\hat{W}}}\tilde g\|\\ &~~~~~~~~~~~~~~~~+\rho_1\alpha_1\|\Xi\|\big(\frac{\sqrt{m}}{\varsigma}+k_p\|\Hes_{\tilde g}^L\psi\|\|\nabla^L_{\breve{\hat{W}}}\tilde g\|\big)\bigg)\|\tilde W\|, \nonumber
        \end{align*}
        and
 \begin{align*}
& \tr\Bigg(\gamma_{1} \bigg(\|\Xi\|\left(1+k_p\|\Hes_{\tilde g}^L\psi\|\|\nabla_{\breve{\hat{W}}}\tilde g\|\right)+\|\der^L_{\bg} \psi\|\|\nabla_{\breve{\hat{W}}}^L\tilde g\|\bigg) \hat{W}^T\tilde{W}\Bigg)\\ &\leq \gamma_{1} \bigg(\|\Xi\|\left(1+k_p\|\Hes_{\tilde g}^L\psi\|\|\nabla_{\breve{\hat{W}}}\tilde g\|\right)\\ 
&~~~~~~~~~~~~~~~~~~~~~~~~~+\|\der^L_{\bg} \psi\|\|\nabla_{\breve{\hat{W}}}^L\tilde g\| \bigg)\left(\varpi\|\tilde W\|-\|\tilde W\|^2\right). \nonumber
   \end{align*}
Therefore, by rearranging the terms, we have
\begin{align*}
  \dot{L}_1 &\leq \bigg(\frac{\rho_1\alpha_1\sqrt{m}}{\varsigma}+ \varpi \gamma_{1} \bigg)\|\Xi\|\|\tilde W\|\\
  &+k_p\bigg(\rho_1\alpha_1+ \varpi \gamma_{1} \bigg)\|\Hes_{\tilde g}^L\psi\|\|\nabla_{\breve{\hat{W}}}\tilde g\|\|\Xi\|\|\tilde W\|\\
 & +\bigg(\rho_1\alpha_2+\varpi\gamma_1\bigg)\|\der_{\bg}^L\psi\|\| \nabla^L_{\breve{\hat{W}}}\tilde g\|\|\tilde W\|-\gamma_{1} \|\Xi\|\|\tilde W\|^2\\
 &-\gamma_1\bigg(k_p\|\Xi\|\|\Hes_{\tilde g}^L\psi\|
 +\|\der^L_{\bg} \psi\|\bigg)\|\nabla_{\breve{\hat{W}}}^L\tilde g\|\|\tilde W\|^2
\end{align*}
By defining  $ \kappa=\frac{\rho_1\alpha_1 \sqrt{m}/\varsigma+ \varpi \gamma_{1}}{2  \sqrt{\gamma_{1}}}>0$,  and adding and subtracting $\kappa^2 \|\Xi\|$  to the above inequality, one can get
\begin{align*}
  \dot{L}_1 &\leq \|\Xi\|\bigg(\kappa^2-\big(\sqrt{\gamma_1}\|\tilde W\|-\kappa\big)^2\bigg) \\
 &+k_p\bigg(\rho_1\alpha_1+ \varpi \gamma_{1}-\gamma_1\|\tilde W\| \bigg)\|\Hes_{\tilde g}^L\psi\|\|\nabla_{\breve{\hat{W}}}\tilde g\|\|\Xi\|\|\tilde W\|\\ 
 & +\bigg(\rho_1\alpha_2+\varpi\gamma_1-\gamma_1\|\tilde W\|\bigg)\|\der_{\bg}^L\psi\|\| \nabla^L_{\breve{\hat{W}}}\tilde g\|\|\tilde W\|. 
\end{align*}
Therefore, if
\begin{align}
    \|\tilde{W}\|\! \geq \!r_1\!\!\!=\beta+\varpi\coloneqq \max\{\frac{\rho_1\alpha_1 \sqrt{m}}{\varsigma\gamma_1}, \frac{\rho_1\alpha_1 }{\gamma_1},   \frac{\rho_1\alpha_2}{\gamma_1}  \}+\varpi>0, \nonumber
\end{align}
$\dot{L}_1<0$, i.e. it is negative definite on and outside the ball with radius $r_1$ in the space of NN weight estimation error. Note that negative definiteness of $\dot L_1$ on the ball is guaranteed unless $\frac{\alpha_1 \sqrt{m}}{\varsigma}= \alpha_1 =   \alpha_2$, which can be easily avoided by modifying the learning parameters.  This proves the boundedness of $\tilde W$ with the ultimate bound $r_1$, i.e., in finite time $T_1$, $\|\tilde W\|$ becomes less than or equal to $r_1$ and remains such, for $t>T_1$. Since the bound $r_1>\varpi$, if the initial condition $\hat W(0)$ satisfies the inequality $\|\hat W(0)\|<\beta$, then $\|\tilde W(t)\|< r_1$ for all $t$ and $T_1=0$. 

\textbf{Step (ii)} To guarantee the ultimate boundedness of $(\tilde g,\tilde \xi)$, the following Lyapunov function is defined for a $\delta>0$:
\begin{align}
      L_2 (\tilde g,\tilde\xi)&= \frac{1}{2} \tilde{\xi}^T \tilde{\xi} + k_p\psi(\tilde g) + \delta\dot{\psi}(\tilde g,\tilde\xi).\label{lyap2}
   \end{align}
Under Assumption \ref{as:quad} for a $\theta_0<\Theta$, we first show that this is indeed a positive-definite function on the set $\Omega=\{(\tilde g,\tilde\xi)|\psi(\tilde g)\leq\theta_0\}$, when $\delta<\sqrt{2b_1}$:
\begin{align*}
    L_2&=\frac{1}{2}\|\tilde\xi\|^2+k_p\psi+\delta(\der^L_{\tilde g}\psi)\tilde\xi\\ &\geq \frac{1}{2}\|\tilde\xi\|^2+k_pb_1\|\der^L_{\tilde g}\psi\|^2-\delta\|\der^L_{\tilde g}\psi\|\|\tilde\xi\|\\
    &\geq \frac{1}{2}\|\tilde\xi\|^2+b_1\|\der^L_{\tilde g}\psi\|^2-\delta\|\der^L_{\tilde g}\psi\|\|\tilde\xi\|\\
    &= \frac{1}{2}(\|\tilde\xi\|-\delta\|\der^L_{\tilde g}\psi\|)^2+(b_1-\frac{\delta^2}{2})\|\der^L_{\tilde g}\psi\|^2\geq 0.
\end{align*}
The third line is true since $k_p\geq 1$. In addition, based on \eqref{eq:b1b2}, we have the norm inequalities:
\begin{align}\nonumber
   & c_1\left\|\begin{bmatrix}
        (\der^L_{\tilde g}\psi)^T\\ \tilde\xi    \end{bmatrix}\right\|^2\leq\begin{bmatrix}
        \|\der^L_{\tilde g}\psi\|\\ \|\tilde\xi\|    \end{bmatrix}^T\overbrace{\begin{bmatrix}
       k_pb_1  & \frac{-\delta}{2}\\ \frac{-\delta}{2} & \frac{1}{2}    \end{bmatrix}}^{=C_1}\begin{bmatrix}
        \|\der^L_{\tilde g}\psi\|\\ \|\tilde\xi\|
    \end{bmatrix}\leq L_2\\
    & \leq \begin{bmatrix}
        \|\der^L_{\tilde g}\psi\|\\ \|\tilde\xi\|    \end{bmatrix}^T\overbrace{\begin{bmatrix}
        k_pb_2  & \frac{\delta}{2}\\ \frac{\delta}{2} & \frac{1}{2}   \end{bmatrix}}^{=C_2}\begin{bmatrix}
        \|\der^L_{\tilde g}\psi\|\\ \|\tilde\xi\|
    \end{bmatrix}\leq c_2\left\|\begin{bmatrix}
        (\der^L_{\tilde g}\psi)^T\\ \tilde\xi    \end{bmatrix}\right\|^2\label{eq:quadL2}
\end{align}
for positive-definite matrices $C_1$ and $C_2$, with $c_1=\lambda_{min}(C_1)$ and $c_2=\lambda_{max}(C_2)$. We also define the constants $a_1\coloneqq\lambda_{min}(A)$, $a_2\coloneqq\lambda_{max}(A)$, and 
\begin{align*}
    h\coloneqq\!\!\!\!\!\!\sup_{\tilde g\in\psi^{-1}(\leq\theta_0)}\!\!\!\|\Hes_{\tilde g}^L\psi\|<+\infty,
\end{align*}
which is well-defined on the compact set $\psi^{-1}(\leq\theta_0)$ ($\psi$ is proper), since $\psi$, and consequently its derivatives, are smooth on $G$.
These constants are used in the proof of boundedness of the states.

Taking the derivative of $L_2$ along the direction of the closed-loop vector field leads to 
    \begin{align*}
        \dot{L}_2\!\!&=\!\!k_p(\der_{\tilde g}^L\psi) \tilde{\xi}+\tilde\xi^T\dot{\tilde\xi}+\delta\tilde\xi^T(\Hes_{\tilde g}^L\psi) \tilde{\xi}+\delta(\der_{\tilde g}^L\psi) \dot{\tilde{\xi}}=k_p(\der_{\tilde g}^L\psi) \tilde{\xi}\\ &+ (\tilde\xi^T\!\!\!\!+\delta\der_{\tilde g}^L\psi)\big(-A\tilde\xi-k_p(\der_{\tilde g}^L\psi)^T-\bB\tilde W\sigma(\hat Vx)+w\big)\\ &+\delta\tilde\xi^T(\Hes_{\tilde g}^L\psi) \tilde{\xi}\\
        &\leq-a_1\|\tilde\xi\|^2-k_p\delta\|\der_{\tilde g}^L\psi\|^2+ \delta a_2\|\tilde\xi\|\|\der_{\tilde g}^L\psi\|
        +\delta h\|\tilde\xi\|^2\\
        &+\begin{bmatrix}
            \delta D^T & D^T
        \end{bmatrix}\begin{bmatrix}
        (\der^L_{\tilde g}\psi)^T\\ \tilde\xi    \end{bmatrix},
    \end{align*}
    such that $D=-\bB\tilde W\sigma(\hat Vx)+w$ and it satisfies $\|D\|\leq\frac{\sqrt{m}r_1}{\varsigma}+\bar w$ assuming that $\|\hat{W}(0)\|<\beta$ ($T_1=0$). Therefore,
    \begin{align*}
        \dot L_2&\leq -\begin{bmatrix}
        \|\der^L_{\tilde g}\psi\|\\ \|\tilde\xi\|    \end{bmatrix}^T\overbrace{\begin{bmatrix}
            k_p\delta & \frac{-\delta a_2}{2}\\\frac{-\delta a_2}{2} & a_1-\delta h
        \end{bmatrix}}^{=E}\begin{bmatrix}
        \|\der^L_{\tilde g}\psi\|\\ \|\tilde\xi\|    \end{bmatrix} \\ &+ \sqrt{1+\delta^2}(\frac{\sqrt{m}r_1}{\varsigma}+\bar w)\left\|\begin{bmatrix}
        \der^L_{\tilde g}\psi)^T\\ \tilde\xi    \end{bmatrix}\right\|.
    \end{align*}
    Matrix $E\in\R^{2\times 2}$ is positive-definite if and only if its trace and determinant are both positive. The determinant 
    \begin{align*}
        \det(E)=k_p\delta(a_1-\delta h)-\frac{\delta^2 a_2^2}{4}>0 \iff
        \delta<\frac{k_pa_1}{k_ph+a_2^2/4},
        \end{align*}
        and if $\delta<a_1/h$, the trace $\tr(E)=k_p\delta+a_1-\delta h>k_p\delta>0$.        
    Since for $k_p\geq 1$ we have
    \begin{align*}
        \frac{a_1}{h+a_2^2/4}<\frac{k_pa_1}{k_ph+a_2^2/4}<\frac{a_1}{h},
    \end{align*}
   it is concluded that there exists a $\delta<\epsilon\coloneqq\min\{\frac{a_1}{h+a_2^2/4},\sqrt{2b_1}\}$ that guarantees $L_2$ is a well-defined Lyapunov function with negative-definite time derivative if the closed-loop dynamics \eqref{er-3} is unperturbed ($\tilde W=\0_{n\times m}$ and $w=\0_{n\times 1}$).     
    Now, let us define $e_1\coloneqq \lambda_{min}(E)$ and $\upsilon\coloneqq\sqrt{1+2b_1}(\frac{\sqrt{m}r_1}{\varsigma}+\bar w)$. Then,
    \begin{align*}
        \dot L_2< -e_1\left\|\begin{bmatrix}
        (\der^L_{\tilde g}\psi)^T\\ \tilde\xi   \end{bmatrix}\right\|^2+\upsilon \left\|\begin{bmatrix}
        (\der^L_{\tilde g}\psi)^T\\ \tilde\xi    \end{bmatrix}\right\|
    \end{align*}
   is negative-definite on and outside the ball $\cB_{\upsilon/e_1}=\{X\in\R^{2n}|~\|X\|\leq \upsilon/e_1\}$. However, this ball must be contained in the set $\bar\Omega=\{[\der^L_{\tilde g}\psi~~\tilde\xi^T]^T\in\R^{2n}|~\psi(\tilde g)\leq\theta_0\}$ to guarantee that $\psi$ remains quadratic or \eqref{eq:b1b2} holds, which is instrumental to defining $L_2$ and performing stability analysis. 
   From the right inequality in \eqref{eq:b1b2}, we have that if
   \begin{align*}
      b_2\|\der^L_{\tilde g}\psi\|^2\leq \theta_0 \iff \|\der^L_{\tilde g}\psi\|\leq \sqrt{\frac{\theta_0}{b_2}},
   \end{align*}
   then $\psi(\tilde g)\leq \theta_0$, i.e., $[\der^L_{\tilde g}\psi~~\tilde\xi^T]^T\in\bar\Omega$ or $(\tilde g,\tilde\xi)\in\Omega$.
Accordingly, if $e_1>\bar\upsilon\coloneqq\upsilon\sqrt{\frac{b_2}{\theta_0}}$, then 
$\cB_{\upsilon/e_1}\subset\bar\Omega$. Note that on the boundary of $\cB_{\upsilon/e_1}$, $\dot L_2$ is negative.

Let us define $\bar\epsilon\coloneqq\min\{\epsilon,\frac{a_1-\bar\upsilon}{h+a_2/2}\}$.
Then, for all $a_2\geq a_1>\bar\upsilon$ and $k_p>\frac{a_1}{\bar\epsilon}-h$, there exists a $\frac{a_1}{k_p+h}<\delta<\bar\epsilon$, such that
\begin{align*}
    e_1&=\frac{k_p\delta+a_1-\delta h-\sqrt{(a_1-\delta h-k_p\delta)^2+\delta^2a_2^2}}{2}\\
    &>\frac{k_p\delta+a_1-\delta h-|a_1-\delta h-k_p\delta|-\delta a_2}{2}\\
    &=a_1-\delta (h+a_2/2)>\bar\upsilon.
\end{align*}
Hence, for any error trajectory of the system with the initial conditions satisfying $[\der^L_{\tilde g(0)}\psi~~\tilde\xi(0)^T]^T\in\bar\Omega\diagdown\cB_{\upsilon/e_1}$, the Lyapunov function $L_2$ and $\|[\der^L_{\tilde g}\psi~~\tilde\xi^T]\|$ (according to \eqref{eq:quadL2}) exponentially decrease for a finite time $T_2$ until it reaches the ball $\cB_{\upsilon/e_1}\subset\bar\Omega$ and stays there afterward. This proves the boundedness of $L_2$ with an ultimate bound $\frac{c_2\upsilon^2}{e_1^2}$, which also gives the ultimate bounds $r_2\coloneqq \frac{b_2\upsilon^2}{e_1^2}$ and $r_3\coloneqq \frac{\upsilon}{e_1}$ for $\psi(\tilde g)$ and $\tilde\xi$, respectively.

\textbf{Step (iii)} To complete the stability analysis, based on (\ref{v3}), let us describe the behaviour of the hidden layer weight estimation error $\tilde{V}\coloneqq V-\hat{V}$ as 
\begin{align}
    \dot{\tilde{V}}&=-\dot{\hat{V}}=\rho_2  \Pi^T\hat{W}^T\hat{W}\sigma(\hat Vx)x^T\nonumber
        +\gamma_{2} \|\tilde{\xi}\|(V-\tilde{V}) \nonumber\\
        &= -\mathfrak{A} \tilde{V}+\mathfrak{F} ,\label{ev1}
\end{align}
where $\mathfrak{A}\coloneqq\gamma_{2} \|\tilde{\xi}\|, \mathfrak{F}=\rho_2 \Pi^T\hat{W}^T\hat{W}\sigma(Vx)x^T    
+\gamma_{2} \|\tilde{\xi}\| V$.         
Boundedness of the error signals ($\tilde g$ and $\tilde\xi$) and the desired trajectory and its derivatives ($g_d$, $\xi_d$, and $\dot\xi_d$) leads to the boundedness of the NN input $x$. Further, since 
$\hat{W}$, $\tilde{\xi}$, and $x$ are all bounded as well as the ideal weights $V$ and $\sigma$ function, we conclude the boundedness of $\mathfrak{F}$ and $\mathfrak{A}$. Therefore, (\ref{ev1}) is a first-order equation with bounded coefficients. $\mathfrak{A}$ is positive; hence, $\tilde{V}$ will be also ultimately bounded and the proof is complete.
\end{proof}
\begin{remark}
  Several parameters are in the learning rules (\ref{v3}) and (\ref{w3})-\eqref{eq:gradwg1} which affect the convergence behaviour of the NN. Increasing the learning rates $\rho_{1,2}$ leads to faster convergence. However, it may lead to more oscillations in transient response or narrow cost functions. The parameters $\alpha_{1,2}$ represent the weight of each term in the objective function (\ref{obj}). Increasing each of them yields a higher contribution of the corresponding term in the weight adjustment. For example, increasing  $\alpha_{1}$ relative to $\alpha_{2}$ results in smaller velocity error. The parameters $\gamma_{1,2}$ are damping factors of the modification terms and play an important role in providing stability of the overall system. On the other hand, if $\gamma_{1,2}$ are too large, we can lead to a saturation. 
    \end{remark}
    \begin{remark}
        The cost function $F$ in \eqref{obj} may be defined for the trajectory error dynamics, instead of its difference from the nominal error dynamics. In this case, the stability result presented in Theorem \ref{thm1} would hold; however, it leads to the poor performance of the controller (large control action and overshoot) in the transient phase when the system has large initial errors.
    \end{remark}
    \begin{remark}
        The ultimate bound for the NN weight estimation error $\tilde W$ is governed by the learning parameters, and it may not be reduced below $\varpi$. Further, the ultimate bounds for the error states $(\tilde g,\tilde\xi)$ can be directly controlled by the parameters $b_{1,2}, \Theta$, corresponding to the design of the configuration error function $\psi$, the control parameter $k_p$, and the eigenvalues of $A$. The bounds are obtained under some inequality conditions on the control parameters. This can also be translated as given a set of control parameters, the neural network controller can attenuate certain levels of disturbances and uncertainties in the system. These inequalities are derived according to the fact that the configuration error function $\psi$ is only locally quadratic, in general. Under the stronger assumption of $\psi$ being a global configuration error function that is globally quadratic (Definition \ref{def:GlobConfErr}), the neural network controller provides globally ultimately bounded stability of the system for all choices of control parameters (see Corollary \ref{cor1}). Note that for systems whose configuration manifold $G$ is compact, there does not exist any globally quadratic $\psi$, since any positive-definite smooth function obtains at least a maximum on $G$.
    \end{remark}

    \begin{corollary}\label{cor1}
Consider the uncertain and disturbed Euler-Poincar\'{e} system \eqref{dynamicsm1}-\eqref{dynamicsm2} on the matrix Lie group $G$ equipped with a global configuration error function $\psi$ with the globally quadratic property that follows a bounded smooth desired trajectory $t\to g_d(t)\in G$. The controller \eqref{opt-con2} with input states defined in \eqref{xx} and learning rules \eqref{v3} \& \eqref{w3}-\eqref{eq:gradwg1} guarantees that the weight error $\tilde W$ is globally ultimately bounded. 
Then, for all initial conditions $(g(0),\xi(0))\in G\times\R^n$ the configuration and velocity errors $(\tilde g,\tilde\xi)$, as well as the approximated weight matrix $\hat V$ remain globally uniformly ultimately bounded.

    \end{corollary}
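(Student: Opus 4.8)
The plan is to re-run the three-step Lyapunov argument from the proof of Theorem~\ref{thm1}, observing that essentially the only place the \emph{local} nature of $\psi$ enters is the requirement that the ultimate-bound ball be contained in a compact sublevel set $\psi^{-1}(\le\theta_0)$ on which the sandwich \eqref{eq:b1b2} is valid. Under Definition~\ref{def:GlobConfErr} (globally quadratic, $\Theta=\infty$) inequality \eqref{eq:b1b2} holds on all of $G$, so this containment requirement evaporates; equivalently, one may let $\theta_0\to\infty$, which drives $\bar\upsilon\to 0$ and renders the parameter inequalities $a_2\ge a_1>\bar\upsilon$ and $k_p>a_1/\bar\epsilon-h$ vacuous. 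Step~(i) needs no change at all: the candidate $L_1=\tfrac12\|\tilde W\|^2$ of \eqref{lyap1} and the computation giving $\dot L_1<0$ for $\|\tilde W\|\ge r_1$ never reference $\psi$, so $\tilde W$ (hence $\hat W$) is globally ultimately bounded exactly as before --- this is precisely the first assertion of the corollary.

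For Step~(ii) I would keep the Lyapunov function $L_2$ of \eqref{lyap2}. Because $\psi$ is now both global, i.e. $\psi(\tilde g)\to\infty$ as $\|\tilde g\|\to\infty$, and globally quadratic, the estimate \eqref{eq:b1b2}, the positive-definiteness of $L_2$ for $\delta<\sqrt{2b_1}$, and the sandwich \eqref{eq:quadL2} all hold on the whole of $G\times\R^n$; in particular $L_2$ is a radially unbounded, globally positive-definite Lyapunov candidate. Running the derivative computation verbatim, now with $h\coloneqq\sup_{\tilde g\in G}\|\Hes_{\tilde g}^L\psi\|$, produces $\dot L_2\le -e_1\big\|[\,(\der^L_{\tilde g}\psi)^T\ \ \tilde\xi^T\,]^T\big\|^2+\upsilon\big\|[\,(\der^L_{\tilde g}\psi)^T\ \ \tilde\xi^T\,]^T\big\|$ with $\upsilon\coloneqq\sqrt{1+2b_1}\,(\tfrac{\sqrt m\,r_1}{\varsigma}+\bar w)$ and $e_1\coloneqq\lambda_{min}(E)$, the same matrix $E$ as in the proof of Theorem~\ref{thm1}. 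The crucial simplification is that for \emph{any} $A$ and any $k_p\ge 1$ one can pick $\delta>0$ with $\delta<\min\{\,k_pa_1/(k_ph+a_2^2/4),\ a_1/h,\ \sqrt{2b_1}\,\}$, a strictly positive number, which already makes $E$ positive-definite; there is no longer any need to impose $e_1>\bar\upsilon$ or to keep $\cB_{\upsilon/e_1}$ inside a sublevel set, since globality of $\psi$ makes that set all of $\R^{2n}$. Hence $\dot L_2<0$ everywhere outside $\cB_{\upsilon/e_1}$, and radial unboundedness of $L_2$ forces every trajectory, from every initial condition $(g(0),\xi(0))\in G\times\R^n$, to enter in finite time and thereafter remain in the sublevel set $\{L_2\le\sup_{\cB_{\upsilon/e_1}}L_2\}$; by \eqref{eq:quadL2} and properness of $\psi$ this yields uniform ultimate bounds on $\psi(\tilde g)$, on $\tilde\xi$, and therefore on $\tilde g$ itself.

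Step~(iii) I would import word for word from the proof of Theorem~\ref{thm1}: boundedness of $\tilde g,\tilde\xi,\hat W$ together with the assumed boundedness of $g_d,\xi_d,\dot\xi_d$ makes the NN input $x$ of \eqref{xx} bounded, so in $\dot{\tilde V}=-\mathfrak{A}\tilde V+\mathfrak{F}$ of \eqref{ev1} the scalar $\mathfrak{A}=\gamma_2\|\tilde\xi\|\ge 0$ and the forcing $\mathfrak{F}$ are bounded, whence $\tilde V$ (hence $\hat V$) is globally ultimately bounded. Combining the three steps gives global uniform ultimate boundedness of $(\tilde g,\tilde\xi)$ and $\hat V$ for all initial conditions.

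I expect the only genuinely new point --- and the one to check carefully --- to be the finiteness of $h$ once the compact sublevel set is replaced by all of $G$: in Theorem~\ref{thm1} $h$ is a supremum over a compact set and hence automatically finite, whereas here it is a supremum over the whole group. This is where the globally quadratic hypothesis of Definition~\ref{def:GlobConfErr} is invoked (and is the reason the corollary insists on globally quadratic rather than merely global), since \eqref{eq:b1b2} holding uniformly constrains $\psi$ to behave like a genuine quadratic form at infinity. Everything else is a routine globalization of estimates already established for Theorem~\ref{thm1}, with the parameter restrictions there no longer required.
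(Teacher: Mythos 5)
Your proposal is correct and follows essentially the same route as the paper, whose own proof of the corollary is a one-line remark that the argument of Theorem~\ref{thm1} goes through because global quadraticity of $\psi$ lets one always pick a suitable $\delta$ with no conditions on the gains; your three-step elaboration is just the detailed version of that remark. The one point you flag---finiteness of $h=\sup_{\tilde g\in G}\|\Hes^L_{\tilde g}\psi\|$ once the supremum is taken over all of $G$ rather than a compact sublevel set---is indeed an implicit assumption the paper glosses over entirely, and your observation that it must be extracted from (or added to) the globally quadratic hypothesis is a fair and slightly more careful reading than the paper's own.
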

    \begin{proof}
        The proof follows the same lines of proof of Theorem \ref{thm1}, except since $\psi$ is globally quadratic we can always find a $\delta$ that ensures $L_2$ is positive definite and ultimately bounded, with no conditions on the control gains.
    \end{proof}

\section{Neural Network Formation Control of Unknown Multi-Agents: A Case Study} \label{NNcont_mtl}
\subsection{Decentralized Formation Control of Multi-agents on $\SE$}

As a case study, let us consider a  team of $N_v$ vehicles (3D rigid bodies) on the Special Euclidean group $\SE$, labeled by $l=1,\ldots,N_v$, and one virtual leader indexed by $l=0$. For a quick review on $\SE$ we refer the reader to Appendix \ref{app:SE(3)}. 
 Let us denote the pose and the body velocity of the virtual leader in a reference frame by $g_0\in\SE$ and $\xi_0\in\R^6$, such that $\xi_0^\wedge\in\se$. We consider a desired trajectory $t\to \bar{g}_{0}(t)\in\SE$ for the leader with the desired body velocity $\bar{\xi}_{0}$.
Hence, the leader equations of motion are
\begin{align}
  \dot{g}_0 &= g_0  {\xi}_0^\wedge\in\R^{4\times 4},\label{leaderkinematics} \\
  \dot\xi_0 &= \bI_0^{-1}\ad^T_{\xi_0}\bI_0\xi_0+\bI_0^{-1}u_0\in\R^6,\label{leaderdynamics}
\end{align}
where $\I_0\in\R^{6\times 6}$ is the leader's inertia matrix, and since the leader dynamics is known and undisturbed, its control input $u_0$ can be determined based on the control law \eqref{ustar1} or any other controller to track the desired trajectory.  
Similarly for the vehicle $l\in\{1,\ldots,N_v\}$, we denote its pose and body velocity by $g_l\in\SE$ and $\xi_l\in\R^6$ and write its dynamics based on \eqref{dynamicsm1}-\eqref{dynamicsm2}: 
 \begin{align}
  \dot g_l &=g_l\xi_l^\wedge\in\R^{4\times 4},\label{agentkinematics}\\
  \dot\xi_l &= \bI_l^{-1}\ad^T_{\xi_l}\bI_l\xi_l+\bB_l u_l+\mu_l(g_l,\xi_l)+d_l\in\R^6.
  \label{agentdynamics}
\end{align}
Here, $\I_l\in\R^{6\times 6}$ is the $l$th agent's unknown inertia. We assume the lower bound $\varsigma_l$ for the minimum eigenvalue of $\bB^{-1}$, $u_l$ is the control input (including the torque and force about its center of mass), and $\mu_l$ and $d_l$ refer to bounded uncertainties and disturbances for agent $l$.
 We now define the relative pose between the $l$th agent and the virtual leader by $g_{l0}:=g_0^{-1}g_l\in\SE$.
\begin{problem}[Decentralized Formation Control]
Let us consider a multi-agent system with the agents' uncertain and disturbed dynamics \eqref{agentkinematics}-\eqref{agentdynamics} and the virtual leader dynamics \eqref{leaderkinematics}-\eqref{leaderdynamics} tracking a smooth trajectory $t\to \bar g_{0}(t)$. 
Given a formation, i.e., reference poses of the agents relative to the virtual leader $\bar g_{l0}\in \SE$ ($l=1,\ldots N_v$), find the local control signals $u_l$ based on \eqref{opt-con2} to make and keep the formation. The control strategy should be decentralized and only use the local information of an individual agent and the known states of the virtual leader. 
\end{problem}
Assuming the availability of the virtual leader states to all agents by solving \eqref{leaderkinematics}-\eqref{leaderdynamics}, we redefine this problem as a trajectory-tracking problem for each agent. The desired trajectories for the agents are defined as $t\to \bar g_l(t)\coloneqq g_0(t)\bar g_{l0}$ with their body velocity and acceleration respectively being $\bar\xi_l=\Ad_{\bar g_{l0}^{-1}}\xi_0$ and $\dot{\bar\xi}_l=\Ad_{\bar g_{l0}^{-1}}\dot\xi_0$ ($l=1,\ldots,N_v$). Then given the configuration and velocity error for the $l$th agent $\tilde g_l$ and $\tilde\xi_l$, respectively, based on \eqref{eq:conferr}-\eqref{eq:velerr}, the tracking error dynamics of the agent can be obtained by \eqref{error-dynamicsm1}-\eqref{error-dynamicsm2}.
The locally quadratic configuration error function on $\SE$ for agent $l\in\{0,\ldots,N_v\}$ is defined by~\cite{BULLO199917, bullo-book}:
\begin{align}
\psi_l\colon\SE&\rightarrow\R_{\ge 0}\nonumber\\
\tilde g_l = \begin{bmatrix}
      \tilde R_l&\tilde p_l\\ \0_{1\times 3}&1
  \end{bmatrix}&\mapsto \frac{1}{2} \tr(\chi_l(\1_{3\times 3}-\tilde R_l))+\frac{1}{2} \tilde p_l^T K_l \tilde p_l  \label{phi_mlt},
  \end{align}
where $\tilde R_l\in\SO$ and $\tilde p_l\in\R^3$ are the rotation and position components of the $l$th agent configuration error, $\chi_l=\frac{1}{2}\tr(\Gamma_l)\1_{3\times 3}-\Gamma_l \in \R^{3 \times 3}$ is a symmetric matrix and $\Gamma_l,K_l \in \R^{3 \times 3}$ are positive-definite symmetric matrices. The smallest critical value of $\psi_l$ greater than $0$ is $\Theta_l=\lambda_{min}(\Gamma_l)+\tr(\Gamma_l)$. Note that we included the virtual leader in this definition since we use \eqref{ustar1} to control its pose. The total derivative and Hessian of this function are:
\begin{align}
    \der^L_{\tilde g_l}\psi_l &=[(\sk(\chi_l \tilde R_l)^\vee)^T~\tilde p_l^T K_l \tilde R_l],\\
    \Hes^L_{\tilde g_l}\psi_l &= \begin{bmatrix}
        \frac{1}{2}\big(\tr(\chi_l\tilde R_l)\1_{3\times 3}-\chi_l\tilde R_l\big) && \0_{3\times 3}\\
      \tilde R_l^T(K_l\tilde p_l)^\wedge \tilde R_l  && \tilde R_l^T K_l \tilde R_l
    \end{bmatrix}.
\end{align}

Given a set of control parameters for the virtual leader and vehicles, i.e., $k_{pl}$, $\Gamma_l$ and $K_l$, and based on \eqref{ideal-error-dynamicsm1}-\eqref{ideal-error-dynamicsm2}, the nominal dynamics $(\tilde g_l^*,\tilde\xi_l^*)$ for $l\in\{0,\ldots,N_v\}$ are obtained. These equations can be integrated as part of evaluating the control actions. We now equip each vehicle with a local NN-based controller \eqref{opt-con2} and the learning rules \eqref{v3} \& \eqref{w3}-\eqref{eq:gradwg1} with the input states $x_l= [\breve{\tilde{g}}_l^T~\tilde\xi^T_l~\breve{\bar{g}}^T_l~\bar\xi^T_l~\dot{\bar{\xi}}^T_l]^T$.

\subsection{Numerical Analysis}

As a numerical example, we consider a scenario where a team of three agents must reach and keep the following formation
\begin{align*}\small
    \bar g_{10}\!\!=\!\!\begin{bmatrix}
        \1_{3\times 3}\!\!\!\!\!\!\!\!\!\! & \begin{bmatrix}
            0\\0\\10
        \end{bmatrix}\\ \0_{1\times 3}\!\!\!\!\!\!\!\!\!\! & 1
    \end{bmatrix}\!\!,  \bar g_{20}\!\!=\!\!\begin{bmatrix}
        \1_{3\times 3}\!\!\!\!\!\!\!\!\!\! & \begin{bmatrix}
            -10\\0\\0
        \end{bmatrix}\\ \0_{1\times 3}\!\!\!\!\!\!\!\!\!\! & 1
    \end{bmatrix}\!\!,  \bar g_{30}\!\!=\!\!\begin{bmatrix}
        \1_{3\times 3}\!\!\!\!\!\!\!\!\!\! & \begin{bmatrix}
            0\\-10\\0
        \end{bmatrix}\\ \0_{1\times 3}\!\!\!\!\!\!\!\!\!\! & 1
    \end{bmatrix}\!\!,
\end{align*}
relative to a virtual leader moving according to \eqref{leaderkinematics}-\eqref{leaderdynamics} with $\bI_0=\1_{6\times 6}$ and the initial conditions $g_0(0)=\1_{4\times 4}$ and $\xi_0(0)=\0_{6\times 1}$. Its desired trajectory is defined as $$t\mapsto\exp([0~1~0~0~0~0]^{T\wedge})\exp(0.2\sin(\frac{t}{10})[1~1~1~2~2~2]^{T\wedge}).$$  
The nominal error dynamics of the vehicles including the virtual leader is governed by the same PD-type controller based on feedback linearization with $k_{pl}=1$, $\Gamma_l=10\1_{3\times 3}$, $K_l=5\1_{3\times 3}$, and $A_l=4\1_{6\times 6}$ for all $l\in\{0,\ldots,3\}$. The mass [kg] and moments of inertia [kg.m$^2$] of the vehicles that are unknown to the controller are set to be
\begin{align*}
    \bI_l&=\begin{bmatrix}
    \diag\{0.7+I_{1l},0.8+I_{2l},0.6+I_{3l}\} & \0_{3\times 3}\\ \0_{3\times 3} & (0.8+\mathfrak{m}_l)\1_{3\times 3}
\end{bmatrix}\\
&~~~~~~~~~~~~~~~~~~~~~~~~~~~~~~~~~~~~~~~~~~~~~+0.5\sin(\frac{2\pi t}{30})\1_{6\times 6},
\end{align*}
 where $\mathfrak{m}_l,I_{1l},I_{2l},I_{3l}\in (0,1)$ $(l=1,2,3)$ are independent random numbers. The initial conditions of the vehicles are specified as follows:
 \begin{align*}\small
      g_1(0)=\begin{bmatrix}
     \1_{3\times 3}&\begin{bmatrix}
         0\\1\\4
     \end{bmatrix}\\ \0_{1\times 3} & 1
 \end{bmatrix}, \xi_1(0)=[0~0~1~0~0~0]^T, 
 \end{align*}
 \begin{align*}\small
 g_2(0)=\begin{bmatrix}
     \exp(\begin{bmatrix}
         0.2\\-0.2\\0.2
     \end{bmatrix}^\wedge)&\begin{bmatrix}
         -5\\-1\\0
     \end{bmatrix}\\ \0_{1\times 3} & 1
 \end{bmatrix},\xi_2(0)=[0~0~0~0~0~0]^T,
 \end{align*}
 \begin{align*}\small
 g_3(0)=\begin{bmatrix}
     \exp(\begin{bmatrix}
         0.4\\-0.4\\0.4
     \end{bmatrix}^\wedge)&\begin{bmatrix}
         -1\\-5\\0
     \end{bmatrix}\\ \0_{1\times 3} & 1
 \end{bmatrix}, \xi_3(0)=[0~1~0~0~0~1]^T.
 \end{align*}

We assume that each vehicle experiences a gravitational force with the fixed acceleration of $[0~0~-10]^T$[m/s$^2$] in the inertial frame and torque due to a shift of $[0.05~0~0]^T$[m] in the center of mass. As part of $\mu_l$, a viscous force of $0.7\xi_l$ is also applied to the vehicle $l\in\{1,2,3\}$. These forces and torques are unknown to the vehicles' controllers. Additionally, an unknown disturbance $d_l$ is exerted on the vehicles whose time evolution is depicted in Figure \ref{fig:dist}. All of the vehicles lose their actuator efficiency after time $t=4$[s] in two directions, such that
for $t>4$[s]
\begin{align*}
    \bE_1&=\diag\{0.1+0.4\sin^2(\frac{2\pi t}{30}),1,1,1,0.1+0.4\sin^2(\frac{2\pi t}{30}),1\},\\
    \bE_{2,3}&=\diag\{0.5,1,1,1,0.5,1\}.
\end{align*}

We now equip each vehicle with an NN controller with 42 input neurons (24 corresponding to non-fixed elements of $\tilde g_l,\bar g_l\in\R^{4\times 4}$ and 18 for $\tilde\xi_l,\bar\xi_l,\dot{\bar{\xi}}_l\in\R^6$), $50$ hidden layer neurons, and 6 output neurons for $u_l$. We choose the following control parameters for $l\in\{1,2,3\}$:
\begin{align*}
    \varsigma_l=0.1,\rho_{1l}=800,\rho_{2l}=2,\gamma_{1l}=2,\gamma_{2l}=0.7,\alpha_{1l}=\alpha_{2l}=0.5.
\end{align*}
As a measure of matching rotation between the virtual leader and vehicles, Figure \ref{fig:angles}  depicts the rotation angles $\theta_l= \arccos(\frac{trace(R_l)-1}{2})$, where $R_l$ is the rotation component of $g_l$ for $l\in\{0,\ldots,3\}$. We also simulate the response of a PD controller with the same nominal gains implemented on Vehicle 1 which becomes unstable. Figure \ref{fig:pos} collects the position trajectories of the virtual leader and vehicles rotated in the virtual leader's coordinate frame. It is clear that despite the unknown dynamics and significant nonlinear time-dependent disturbances applied to the multi-agent system, the vehicles can make and keep the formation after approximately $2$[s], which is the settling time of the nominal error dynamics. As an example, for Vehicle 1 undergoing a more severe actuator fault, we include the trajectory of the velocity error versus the nominal velocity error in Figure \ref{fig:vel}, which exhibits a good match. At time $t=15$[s] when the actuator efficiency is dropped to its minimum value of only $10\%$, we can see the controller has difficulty keeping the error at zero. Same scenario happens at time $t=22.5$[s] when the mass and moments of inertia of the Vehicle are at their minimum value while the disturbance is at its maximum. Also, Vehicle 1's control input is depicted in Figure \ref{fig:cont} which demonstrates the adaptation of the controller to the unknown disturbances applied to the system.  
It can be seen that after learning NN, the tracking errors are bounded and the agents can keep the desired formation with bounded inputs. It should be noted that in  NN controllers the weights cannot be trained offline. However, if we choose the initial weights close to the ideal values the training period will be smoother.

   \begin{figure}[!t]
\centering
\includegraphics[width=3in]{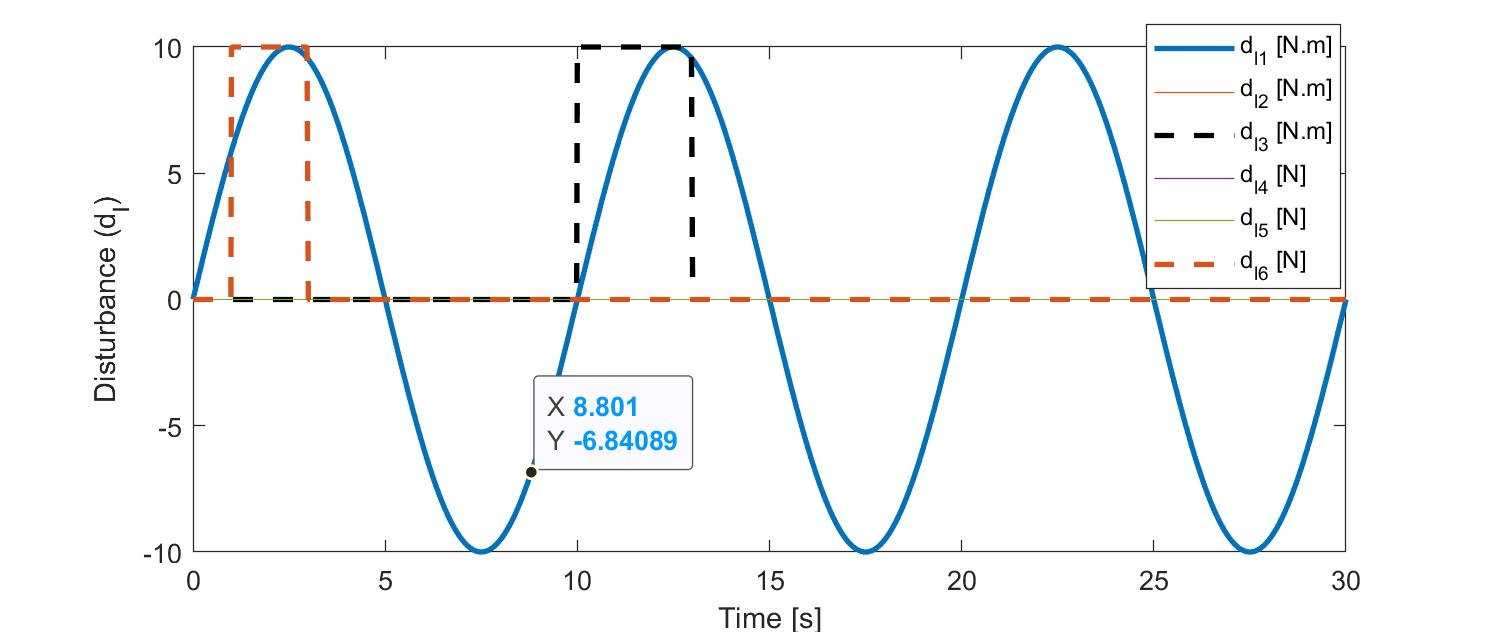}
\caption{Unknown time-dependent external disturbances}
\label{fig:dist}
\end{figure}
  \begin{figure*}[!t]
\centering
\subfloat[]{\includegraphics[width=3.5in]{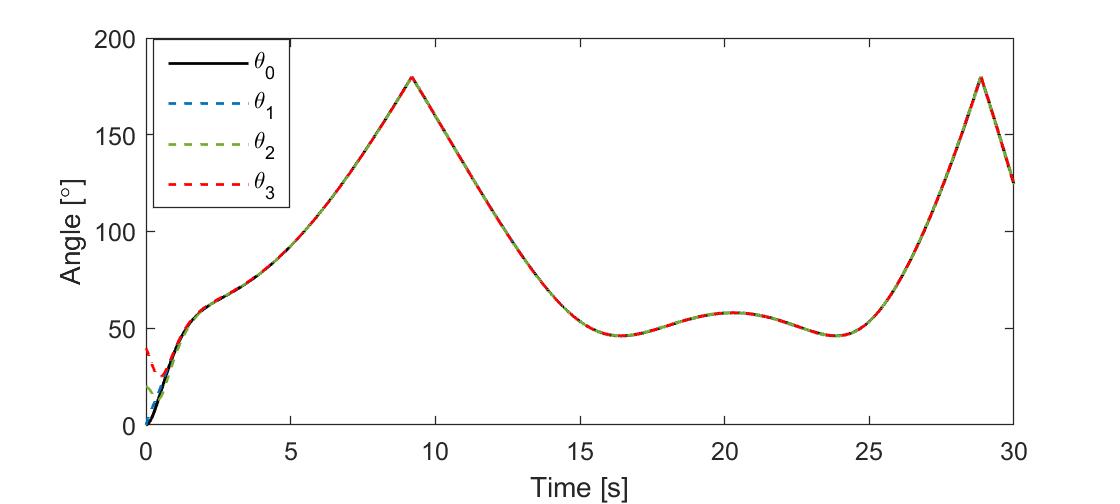}%
\label{fig:angles}}
\hfil
\subfloat[]{\includegraphics[width=3.5in]{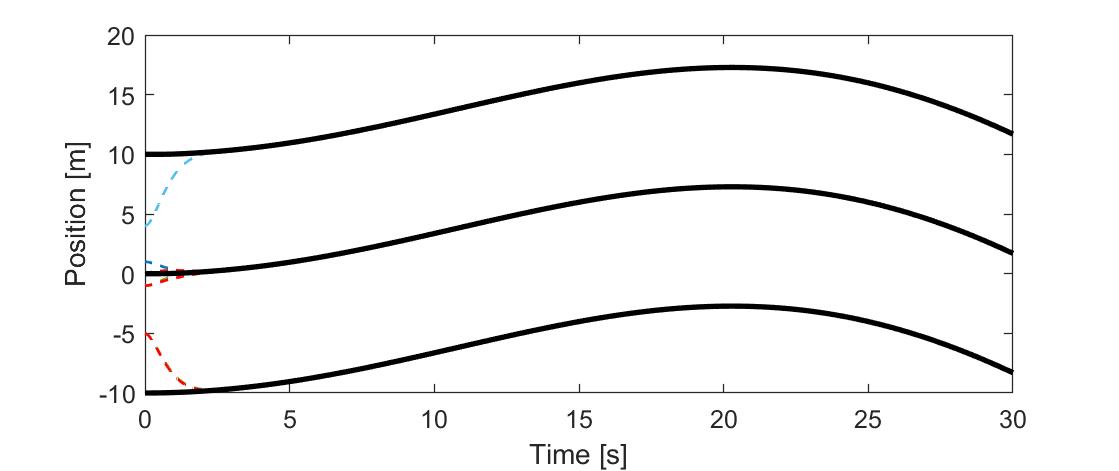}%
\label{fig:pos}}
\caption{(a) Rotation angels of vehicles;  (b) 3D position trajectories of vehicles. Curves corresponding to the virtual leader and desired offsets are black, Vehicle 1 is blue, Vehicle 2 is green, and Vehicle 3 is red.}
\end{figure*}
\begin{figure*}[thpb]
\centering
\subfloat[]{\includegraphics[width=3.5in]{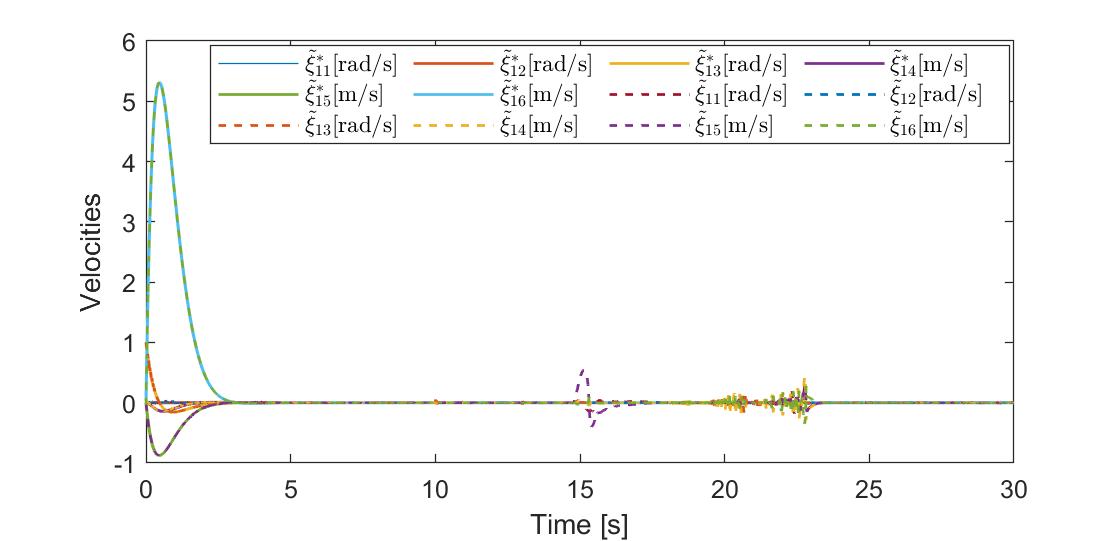}%
\label{fig:vel}}
\hfil
\subfloat[]{\includegraphics[width=3.5in]{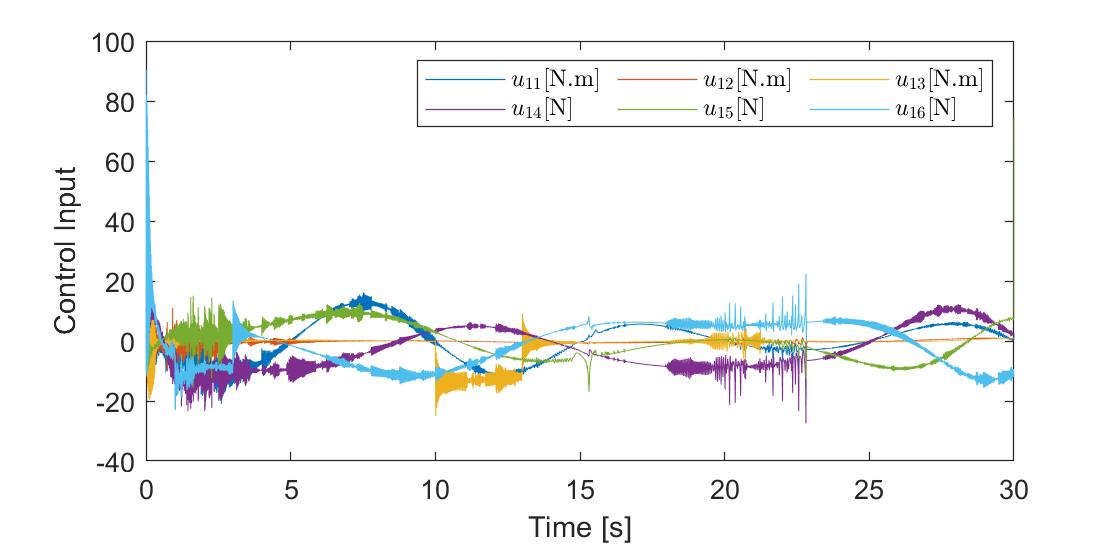}
\label{fig:cont}}
\caption{(a) Velocity error for Vehicle 1 when compared with the nominal error dynamics;  (b) Control input for vehicle 1.}
\end{figure*}
%
%
 %
\section{Conclusions} \label{conclu}
In this paper, a geometric NN tracking control was proposed for a system with unknown and disturbed dynamics on matrix Lie groups. The intrinsic learning rules were developed to minimize tracking error functions defined according to the Lie group structure of the configuration manifold. By formulating the learning rules on the Lie algebra and eliminating the need for parameterization, the approach reduces computational complexity while preserving the geometric structure of the system. We proved the uniform ultimate boundedness of the tracking error and NN weights, ensuring almost global stability under mild assumptions.

The proposed learning and control framework applies to a wide range of systems, including mobile robots. By leveraging the left invariance of the tangent bundle, our method ensures that velocities remain independent of the configuration, enhancing robustness and compatibility with onboard sensor measurements. Furthermore, as control parameters are updated online, the approach can adapt to changing operational conditions, unmodeled dynamics, bounded disturbances, and even limited structural damage while maintaining stability and acceptable performance. Future work will focus on extending this NN controller to nonholonomic underactuated systems evolving on matrix Lie groups, with experimental validation.

\bibliographystyle{ieeetr}
\bibliography{sample}
\appendices
\section{Proof of Proposition 1}\label{app:pro-prop1}
The proof consists of three steps. \textbf{Step 1:} Based on Lemma \ref{lem:smooth}, there exists at least an extension $\bar f$ for the function $f$. \textbf{Step 2:}  Based on the definition in \eqref{eq:dir-der} and the fact that the directional derivative is independent of the curve: 
    \begin{align*}
        \left<\der_g^L f,\eta_i\right>=\left<\der_gf|g\eta_i^\wedge\right>=\left.\frac{d}{d\epsilon}\right|_{\epsilon=0}\!\!\!\!\!\!\!\!\!\! f(g\exp(\epsilon \eta_i^\wedge)).
    \end{align*}
We chose the curve $g(\epsilon)=g\exp(\epsilon \eta_i^\wedge)$ to evaluate the directional derivative. \textbf{Step 3:} Since $G$ is an embedded manifold in $\R^{N\times N}$, the inclusion map $\iota\colon G\hookrightarrow \R^{N\times N}$ is smooth and its push forward $\iota_*$ at every $g\in G$ is the restriction of the tangent space $T_g\R^{N\times N}$ to $T_gG$. Therefore, 
    \begin{align*}
        \left<\der_gf|g\eta_i^\wedge\right>&=\left<\der_g(\bar f\circ \iota)|g\eta_i^\wedge\right> = \left<\iota^*\der_g\bar f|g\eta_i^\wedge\right>\\ 
        &= \left<\der_g\bar f| \iota_*g\eta_i^\wedge\right> = \left<\der_g\bar f| g\eta_i^\wedge\right>=\left.\frac{d}{d\epsilon}\right|_{\epsilon=0}\!\!\!\!\!\!\!\!\!\! \bar f(g+\epsilon g\eta_i^\wedge).
    \end{align*}
    Here, we use the fact that $\bar f$ is the (local) extension of $f$, i.e., $f=\bar f\circ \iota$, in the second equality. Fourth equality is true, since $g\eta_i^\wedge\in T_gG$. Finally, the last equality is the computation of the directional derivative in the ambient manifold $\R^{N\times N}$. 

    \section{Proof of Lemma 3 \& 4}\label{app:lem3&4}
    We first prove Lemma \ref{lem:gradW}. We take the left translated gradient of $\tilde g$ with respect to an arbitrary element $W_{ij}$ using Definition \ref{def:JAcobian}. Note that the natural basis for $\fg$ induced by $\wedge$ is assumed. 
    \begin{align*}
        \frac{\partial}{\partial t} \nabla^L_{{\hat{W}}_{ij}}\tilde g &= \frac{\partial}{\partial t} \big(\tilde g^{-1}\frac{\partial \tilde g}{\partial \hat{W}_{ij}}\big)^\vee = \big(-\tilde g^{-1}\dot{\tilde g}\tilde g^{-1} \frac{\partial \tilde g}{\partial \hat{W}_{ij}} +  \tilde g^{-1} \frac{\partial \dot{\tilde {g}}}{\partial \hat{W}_{ij}} \big)^\vee \\
        &=\big(-\tilde\xi^\wedge (\nabla^L_{{\hat{W}}_{ij}}\tilde g)^\wedge +  (\nabla^L_{{\hat{W}}_{ij}}\tilde g)^\wedge\tilde\xi^\wedge + (\nabla_{{\hat{W}}_{ij}}\tilde \xi)^\wedge \big)^\vee\\
        &=-\ad_{\tilde\xi} (\nabla^L_{{\hat{W}_{ij}}}\tilde g)+\nabla_{{\hat{W}_{ij}}}\tilde\xi\in\R^{n}.
        \end{align*}
        As for $\tilde\xi$, we use \eqref{er-2} and by Definition \ref{def:der} and \eqref{eq:Der}:
        \begin{align*}
       \frac{\partial}{\partial t}& \nabla_{{\hat{W}_{ij}}}\tilde \xi = \frac{\partial \dot{\tilde \xi} }{\partial \hat{W}_{ij}}= -A \frac{\partial {\tilde \xi} }{\partial \hat{W}_{ij}}-k_p\Der^L_{\tilde g}(\der^L_{\tilde g}\psi)\big(\tilde g^{-1} \frac{\partial {\tilde g} }{\partial \hat{W}_{ij}}\big)^\vee\\
       &+\bB\stackrel{\mbox{$--i--~~$}}{[0\cdots \sigma_j\cdots 0]^T}\\
       &=-A \nabla_{\hat{W}_{ij}}\tilde \xi - k_p(\Hes^L_{\tilde g}\psi)\nabla^L_{\hat{W}_{ij}}\tilde g+\bB\stackrel{\mbox{$--i--~~$}}{[0\cdots \sigma_j\cdots 0]^T}\in\R^{n}.
    \end{align*}
The full matrix equations can be formed by assembling all such columns for the vector $\breve{\hat{W}}\in\R^{nm}$ according to \eqref{eq:Jacobian} and the definition of \textit{breve} operator in \eqref{eq:breve}.

Proof of Lemma 4 follows the same lines for an arbitrary $\hat{V}_{ij}$, except that \eqref{eq:gradxi} holds since
\begin{align*}
    \frac{\partial \big(\hat{W}\sigma(\hat{V}x)\big)}{\partial \hat{V}_{ij}}= \left.\hat{W}\nabla_{y}\sigma\right|_{y=\hat{V}x}\stackrel{\mbox{$--i--~~$}}{[0\cdots x_j\cdots 0]^T}\in\R^{n},
\end{align*}
and based on the definition of $\sigma$ in \eqref{eq:sigma} we have
$
    \frac{\partial\sigma_i}{\partial y_i}=\frac{4e^{-2y_i}}{(1+e^{-2y_i})^2}=1-\sigma_i^2.
$

\section{Proof of Proposition 2}\label{app:prop2}
To prove this proposition we have to find $\nabla_{\hat{W}} F$. We compute it by focusing on each term of $F$, separately. 


For the first term, we use \eqref{eq:gradwxi} and Assumptions \ref{as:staticapprox} and \ref{as:ApproxInertia} to first compute 
\begin{align}\label{eq:gradxistat}
    \nabla_{\breve{\hat {W}}} \tilde \xi= -k_pA^{-1}(\Hes^L_{\tilde g}\psi)\nabla^L_{\breve{\hat{W}}}\tilde g
       +A^{-1}\left[
           \frac{\sigma_1}{\varsigma}\1_{n\times n} ~\cdots~\frac{\sigma_m}{\varsigma}\1_{n\times n}
       \right]_{y=\hat Vx}.
\end{align}
Then, since $A$ is symmetric and positive-definite:
\begin{align*}
    \nabla_{\hat W}&\big(\frac{1}{2}(\tilde\xi-\tilde\xi^*)^TA(\tilde\xi-\tilde\xi^*)\big) = \Big(\big(\nabla_{\breve{\hat {W}}} \tilde \xi\big)^TA(\tilde\xi-\tilde\xi^*)\Big)\breve\breve = \\
    &-k_p\left[\Big((\Hes^L_{\tilde g}\psi)\nabla^L_{\breve{\hat{W}}}\tilde g\Big)^T(\tilde\xi-\tilde\xi^*)\right]\breve\breve \\ &+\left[
           \frac{\sigma_1}{\varsigma}(\tilde\xi-\tilde\xi^*)^T ~\cdots~\frac{\sigma_m}{\varsigma}(\tilde\xi-\tilde\xi^*)^T\right]^T\breve\breve \\ &= -k_p\left[\Big((\Hes^L_{\tilde g}\psi)\nabla^L_{\breve{\hat{W}}}\tilde g\Big)^T(\tilde\xi-\tilde\xi^*)\right]\breve\breve + \frac{1}{\varsigma}(\tilde\xi-\tilde\xi^*)\sigma^T.
\end{align*}

The streightforward computation of the second term is based on \eqref{eq:defgradF} and Definition \ref{def:der}:
$$\nabla_{\hat{W}}\psi=\left[\big(\der_{(\tilde g^*)^{-1}\tilde g}^L\psi\big)\nabla^L_{\breve{\hat{W}}}\tilde g \right]^T\breve\breve. $$
Note that according to Definition \ref{def:JAcobian} and the fact that the nominal error dynamics is independent of $\hat W$, we have $\nabla^L_{\breve{\hat{W}}}(\tilde g^*)^{-1}\tilde g=\nabla^L_{\breve{\hat{W}}}\tilde g$.  With some manipulation, \eqref{w3} is obtained, which is dependent on the propagation of $\nabla^L_{\breve{\hat{W}}}\tilde g$. This term can be solved in time using \eqref{eq:gradwg}. It is easy to see that under Assumptions \ref{as:staticapprox} and \ref{as:ApproxInertia}, we can substitute \eqref{eq:gradxistat} into \eqref{eq:gradwg} to obtain \eqref{eq:gradwg1}.
\section{Proof of Lemma 5}\label{app:lem5}
Consider the control signal \eqref{opt-con2} and the element $\hat V_{ij}$:
    \begin{align*}
       \Big(\nabla_{{\hat{V}}}\Big(\frac{1}{2}u^T u\Big)\Big)_{ij}&= \frac{\partial}{\partial \hat{V}_{ij}}\Big(\frac{1}{2}u^T u\Big)=u^T\frac{\partial u}{\partial \hat{V}_{ij}} \\
       &=\overbrace{\sigma^T\hat{W}^T\hat{W}\Pi}^{D}\stackrel{\mbox{$--i--~~$}}{[0\cdots x_j\cdots 0]^T}=D_ix_j.\\
    \end{align*}
    Therefore, in matrix form we have
      $  \nabla_{{\hat{V}}}\Big(\frac{1}{2}u^T u\Big) = D^Tx^T$, which completes the proof knowing that $\Pi^T=\Pi$.

    \section{The Special Euclidean Group $\SE$}\label{app:SE(3)}
        The Special Euclidean group $\SE$ is an example of a $6$-dimensional matrix Lie group that is embedded in both  $\mathbb{GL}(4)$ and $\R^{4\times 4}$, whose elements are in the form of 
\begin{align*}
g=\begin{bmatrix}
 R & p \\
\mathbf{0}_{1\times 3} & 1 \\
\end{bmatrix}
 \in \SE\subset\R^{4\times 4}\cong\R^{16}. \label{g1}
 \end{align*}
Here, $R$ is a rotation matrix in the Special Orthogonal group $\SO:=\{R\in\R^{3\times 3}|\,\,R^TR=\1_{3\times 3},\,\, \det(R)=1\}$ and $p\in\R^3$ is a position vector. Note that since the last row of the elements of $\SE$ is fixed, one can embed $\SE$ into the smaller vector space of $\R^{12}$.
     The Lie algebra of the Special Euclidean group is denoted by $\se$, which is a 6-dimensional vector space. Using the isomorphism $\wedge\colon\R^6\rightarrow\se$, the elements of $\se$ are identified as 
$$\xi^\wedge\coloneqq 
 \begin{bmatrix}
{\omega}^\wedge & v \\
\mathbf{0}_{1\times 3} & 0 \\
 \end{bmatrix}
 \in {\se} \subset \mathbb{R}^{4 \times 4},$$
for every $\xi=[\omega^T~~ v^T]^T\in\R^6$ that includes the angular ($\omega\in\R^3$) and linear ($v\in\R^3$) velocity components. For every $\omega, r\in\R^3$ the vector space isomorphism $\wedge\colon\R^3\rightarrow\so\subset\R^{3 \times 3}$ is defined by $\omega^\wedge r=\omega\times r$.
Accordingly using the canonical basis of $\R^3$, we have a natural basis for $\so$ and hence for $\se$.

For $\SE$, the exponential map is computed through
\begin{align*}
    e^{{\xi}^\wedge}&=\begin{bmatrix}\1_{3\times 3} & v \\ \0_{1\times 3} & 1\end{bmatrix}, \,\,\, \omega=\0_{3\times 1}\\
    e^{\xi^\wedge}&=\begin{bmatrix} e^{\omega^\wedge}& (\1_{3\times 3}-e^{\omega^\wedge})\frac{\omega^\wedge v}{\|\omega\|^2}+\frac{\omega {\omega}^T v}{\|\omega\|^2} \\ \0_{1\times 3} &1\end{bmatrix}, \,\,\, \omega\neq \0_{3\times 1}
\end{align*}
where 
\begin{align*}
     e^{\omega^\wedge}=\1_{3\times 3}+\frac{\omega^\wedge}{\|\omega\|}\sin(\|\omega\|)+ \frac{(\omega^\wedge)^2}{\|\omega\|^2}\big(1-\cos (\|\omega\|)\big).
\end{align*}
Further, the adjoint representations of the group and its Lie algebra are
    $$\Ad_g =\begin{bmatrix}
        R && \0_{3\times 3} \\ p^\wedge R && R
    \end{bmatrix}, ~~~~~ \ad_\xi=\begin{bmatrix}
        \omega^\wedge && \0_{3\times 3} \\ v^\wedge && \omega^\wedge
    \end{bmatrix}.$$
\begin{IEEEbiography}[{\includegraphics[width=1in,height=1.25in,clip,keepaspectratio]{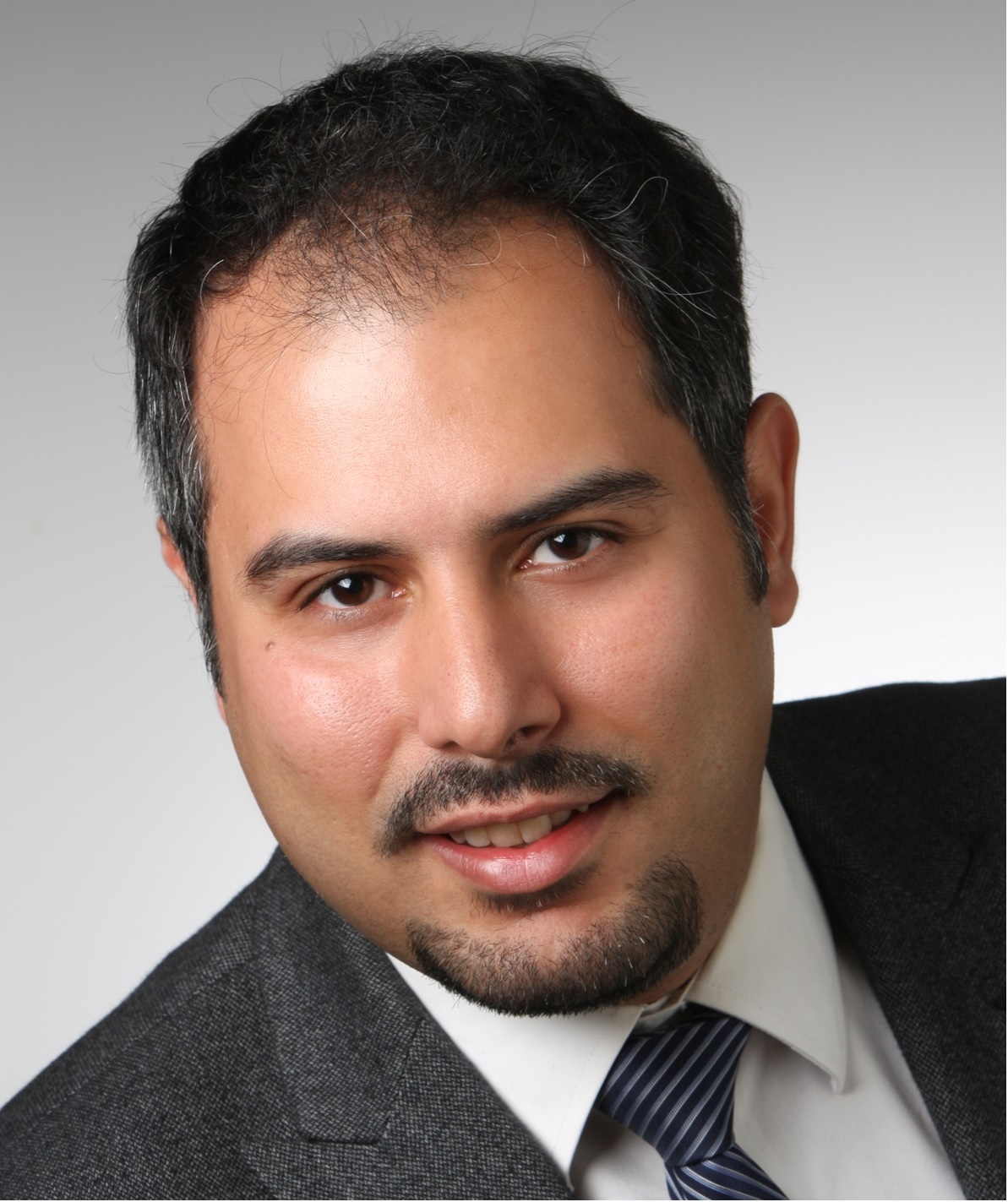}}]{Robin Chhabra} (S'25-M'16) is an assistant professor with the Department of Mechanical, Industrial, and Mechatronics Engineering, Toronto Metropolitan University. He also served as a
Tier-II Canada Research Chair in autonomous space robotics and mechatronics at Carleton University. Dr. Chhabra received his M.A.Sc. (2008) and Ph.D. (2013) from the University of Toronto. After a postdoctoral fellowship, he joined MacDonald Dettwiler
and Associates Ltd., where he
performed applied research on the control of space systems. Dr.
Chhabra’s team researches geometric mechanics, nonlinear controls, and artificial intelligence to develop novel methodologies for the general intelligence of robotic systems.
\end{IEEEbiography}
\begin{IEEEbiography}
[{\includegraphics[width=1in,height=1.25in,clip,keepaspectratio]{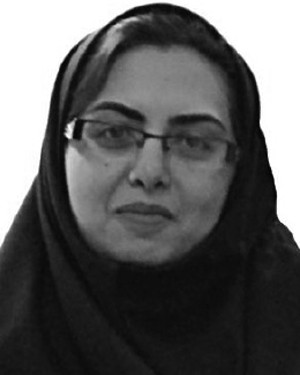}}]
{Farzaneh Abdollahi} (S'98–M'09) received the B.Sc. degree in electrical engineering from the Isfahan University of Technology, Isfahan, Iran, in 1999, the M.Sc. degree in electrical engineering from the Amirkabir University of Technology, Tehran, Iran, in 2003, and the Ph.D. degree in electrical engineering from Concordia University, Montreal, QC, Canada, in 2008. She is currently an Associate Professor with the Amirkabir University of Technology and a Research Assistant Professor with Carleton University. Her research interests include neural networks, robotics, control of nonlinear systems, control of multiagent networks, and robust and switching control.

\end{IEEEbiography}

%
%
%
%


\vfill

\end{document}